\renewcommand{\deg}{{\textbf{d}}}
\newtheorem{lemma}{Lemma}[section]
\newtheorem{theorem}[lemma]{Theorem}
\newtheorem{corollary}[lemma]{Corollary }
\newtheorem{claim}{Claim}
\title{Anti-crossings occurrence as exponentially closing gaps\\ in Quantum Annealing}
\author[1,2]{Arthur Braida}
\author[1]{Simon Martiel}
\author[2]{Ioan Todinca}
\affil[1]{Atos Quantum Lab, Les Clayes-sous-Bois, France}
\affil[2]{LIFO - Laboratoire d'Informatique Fondamentale d'Orléans, Université d'Orléans, France}
\date{April 2023}
\begin{document}

\maketitle

\begin{abstract}
    This paper explores the phenomenon of avoided level crossings in quantum annealing, a promising framework for quantum computing that may provide a quantum advantage for certain tasks. Quantum annealing involves letting a quantum system evolve according to the Schrödinger equation, with the goal of obtaining the optimal solution to an optimization problem through measurements of the final state. However, the continuous nature of quantum annealing makes analytical analysis challenging, particularly with regard to the instantaneous eigenenergies. The adiabatic theorem provides a theoretical result for the annealing time required to obtain the optimal solution with high probability, which is inversely proportional to the square of the minimum spectral gap. Avoided level crossings can create exponentially closing gaps, which can lead to exponentially long running times for optimization problems. In this paper, we use a perturbative expansion to derive a condition for the occurrence of an avoided level crossing during the annealing process. We then apply this condition to the MaxCut problem on bipartite graphs. 
    We show that no exponentially small gaps arise for regular bipartite graphs, implying that QA can efficiently solve MaxCut in that case.
    On the other hand, we show that irregularities in the vertex degrees can lead to the satisfaction of the avoided level crossing occurrence condition.
    We provide numerical evidence to support this theoretical development, and discuss the relation between the presence of exponentially closing gaps and the failure of quantum annealing.
\end{abstract}

\section{Introduction}

Quantum annealing (QA) is one of the two promising frameworks for quantum computing that may end with a quantum advantage on some specific tasks. Also named adiabatic quantum computing (AQC), it has been introduced by Fahri et. al \cite{farhi2000quantum} in 2000 and stands for the analog part of the gate-based model. Although the two frameworks are known to be equivalent (one can efficiently simulate the other) \cite{albash2018adiabatic}, their studies rely on different theoretical tools. QA has gained lots of attention in the last decade because it seems well-suited to solve combinatorial optimization problems. One largely studied gate-based algorithm, namely QAOA \cite{farhi2014quantum}, is QA-inspired and has brought a lot of attention to the NISQ era. The goal of quantum annealing is to let a quantum system evolve along a trajectory according to the Schrödinger equation. Given some hypotheses, if the Hamiltonians are well defined, measuring the final state after a long enough evolution gives (with high probability) the optimal solution to the optimization problem. This result is guaranteed by the adiabatic theorem. 
This theoretical result describes the annealing time requested to obtain with high probability the optimal solution as a function of the  minimum spectral gap $ \Delta_{\min}$. The latter is defined as the minimum, over the whole adiabatic process, of the difference between the two lowest eigenenergies of the instantaneous Hamiltonian.
The adiabatic theorem states that, by allowing a runtime inversely proportional to the square of the minimum gap, this ensures a constant probability of observing the optimal solution.
In general, exponentially closing minimum gaps yields a running time exponential in the input size but this is only an empirical result.

One major obstacle to this computing model is its analytical analysis, the continuous part of QA makes the equations very difficult to manipulate. The adiabatic theorem has focused a great deal of research on the study of these instantaneous eigenenergies. Since \cite{wilkinson1989statistics}, a physical phenomenon called avoided level crossing (or anti-crossing AC) is known to create an exponentially closing gap bringing the provable runtime to solve an optimization problem to be exponential in the size of the problem. AC is also often referred as first-order quantum phase transition \cite{Amin_2009}. This observation has justified numerous studies of anti-crossings to derive the complexity of quantum annealing runtime. In \cite{altshuler2010anderson}, the authors show that for 
NP-hard problem 3-SAT, an AC will occurs at the end of the evolution, called perturbative crossing, leading to the failure of quantum annealing. However, the appearance of an AC is closely related to the Hamiltonian that one chooses to solve a specific problem. In particular, changing this Hamiltonian can remove or mitigate the avoided crossing \cite{choi2020effects}.
Other authors have focused on giving a more mathematical definition of this phenomenon considering different settings \cite{choi2020effects,Braida_2021}. Finally, in another approach called diabatic annealing, the goal is to manage to create a second AC to compensate for the probability leak of measuring the ground state created by the first AC \cite{feinstein2022effects}. 
It is important for the reader to note that there is still no consensus on a formal definition of an AC. Nevertheless, all definitions agree that an AC only occurs in case of an exponentially closing gap. Some authors \cite{altshuler2010anderson,choi2020effects} consider that exponentially small gaps can appear without constituting an AC because in some situations it is unclear that such gaps come from first-order quantum phase transition.
In the rest of the paper, we will call AC an exponentially closing gap following the work of \cite{Werner:2023zsa}. We will moderate this assertion, as well as the computational inefficiency of QA, in the discussion after the numerical study.

In general, studying the instantaneous eigen pairs, and \textit{a fortiori} ACs, is a hard problem since there is no closed form expression for them. In this work, we use a perturbative expansion of the initial state, the ground state and the first excited state as in \cite{Werner:2023zsa} to manipulate simpler expressions while still being able to say something about the eigenenergies. This perturbative analysis allows us to derive a condition on the occurrence of an AC during the process. We then apply this condition to the well-known MaxCut problem, a fundamental combinatorial optimization problem that has numerous applications in various fields, including computer science (Pinter problem) and physics (Ising models) \cite{Barahona1988}. We show that on regular bipartite graphs, there is no appearance of AC during the annealing but if we remove the regularity constraint, we can construct a family of bipartite graphs that satisfies the condition of AC's occurrence. The striking outcome is that exponentially closing gaps can arise while solving MaxCut on a bipartite graph if it is irregular enough. Although MaxCut on bipartite graphs is known to be solved trivially by classical algorithms, there is no formal proof of efficient (or non-efficient) resolution in the quantum regime. To the best of our knowledge, these are the first proven results on MaxCut using QA. To support this theoretical development, we provide numerical analyses of the gaps of small instances, demonstrating the presence of ACs. 
A final uncommon observation is that QA seems to efficiently solve MaxCut on such instances, despite the presence of exponentially small gaps, thus raising the question of the relation between QA failure and exponential closing gaps.


\paragraph{Organization of the paper: }The paper is organized as follows. In Section \ref{sec:2}, we introduce the mathematical notations for QA and the preliminaries on the perturbative analysis. In Section \ref{sec:3}, we develop the perturbative analysis on QA from which we derive the condition of occurrence of an anti-crossing. We also show a more practical corollary to use on concrete problems. In Section \ref{sec:4}, we apply the construction to the MaxCut problem on bipartite graphs. First, we show that no AC will appear during annealing if the graph is regular, we then demonstrate that high irregularity can lead to a satisfaction of the AC occurrence condition. We finally construct such a bipartite graph family and we use small instances to plot the eigenvalues and observe the exponentially closing minimum gap from a numerical point of view. We finish with a discussion on AC definition and QA computational efficiency.

\section{Quantum annealing and perturbative analysis}
\label{sec:2}

This section introduces the quantum annealing (QA) framework and explains the general concepts of perturbative analysis.

\paragraph{Computing with quantum annealing: }In QA, the instantaneous physical system is represented by the vector $|\psi(t)\rangle$, where the time parameter $t$ goes from 0 to $T$, the runtime of the process. The evolution of this state is governed by the Schrödinger equation subject to a time-dependent Hamiltonian $H(t)$:
$$
i\frac{d}{dt}|\psi(t)\rangle =H(t)|\psi(t)\rangle
$$
where $\hbar$ is taken as unity. The initial state $|\psi_0\rangle=|\psi(t=0)\rangle$ is taken to be the ground state, i.e. the state of minimal eigenvalue, of the initial Hamiltonian $H_0=H(t=0)$. So $H_0$ needs to be easy enough to be able to prepare the initial state. Then the Hamiltonian is smoothly changed toward the final one $H_1=H(T)$ which encodes the solution of a combinatorial optimization problem in its ground state $|GS\rangle$, in the sense that $|GS\rangle$ corresponds to a classical state encoding the optimal solution $x_{opt}$ of our problem.
The time-dependent $H(t)$ can be viewed as an interpolation $(1-s(t))H_0+s(t)H_1$ where $s(t)$ denotes the time trajectory going from 0 at $t=0$ to 1 at $t=T$. For a standard linear interpolation $s(t)=\frac{t}{T}$. It is usual to look at the Hamiltonian and the state vector as a function of $s$ and the Schrodinger equation becomes :
$$
i\frac{d}{ds}|\psi(s)\rangle =TH(s)|\psi(s)\rangle, \hspace{1cm} \text{where } H(s)=(1-s)H_0+sH_1 \text{ for } s\in [0,1].
$$
The restriction on starting from the ground state of the initial Hamiltonian comes from the adiabatic theorem. In its more general form, it stipulates that for a ‘‘long enough'' runtime a quantum state $|\psi(t)\rangle$ under a Hamiltonian $H(t)$ stays in the same instantaneous eigenspace during the whole process. Here, ‘‘long enough'' is characterized by the minimum gap $\Delta_{\min}$, namely $T\sim \mathcal{O}\left(\Delta_{\min}^{-2}\right)$. Given that there is a very natural way to encode an optimization problem in a Hamiltonian such that its ground state encodes the solution \cite{farhi2000quantum}, the adiabatic theorem ensures success if the state is initialized in the ground state of $H_0$ motivating the restriction on $|\psi_0 \rangle$. It is important to notice that this restriction is not mandatory if the annealing is out of the adiabatic regime. 

\paragraph{Perturbative analysis: } In general, the perturbative analysis is used to study the effect a perturbation has on a system well defined without. For example, given two Hermitian matrices $A$ and $B$, we know an eigenpair ($x,\lambda$) of $A$, i.e. $Ax=\lambda x$ and we are interested in how a perturbation $B$ will change this state. In other words, if ($x,\lambda$) represents the $k^{th}$ eigenpair of $A$, we are interested in the $k^{th}$ eigenpair ($x_\mu,\lambda_\mu$) of $A+\mu B$ for a small parameter $\mu$. We suppose then that there exists a polynomial expansion in $\mu$ computing ($x_\mu,\lambda_\mu$). We write these expansions as:
\begin{align*}
    x_\mu &= x+ x^{(1)}\mu + x^{(2)}\mu^2+ x^{(3)} \mu^3+ ... \\
    \lambda_\mu &= \lambda +  \lambda^{(1)} \mu+ \lambda^{(2)}\mu^2+ \lambda^{(3)}\mu^3 + ... 
\end{align*} where $x^{(i)}$ and $\lambda^{(i)}$ represent the different coefficients of the polynomial expansion being respectively vectors and scalars. In practice, to be able to say something interesting, we stop the expansion at some order $i$. The validation of the truncation is justified by the ratio of the $(i+1)^{th}$ term over the $i^{th}$ being small. 

The different coefficients are derived iteratively by identification in the eigen relation of the perturbed matrices. Namely, we identify each term in $\mu^j$ in the relation $(A+\mu B)x_\mu=\lambda_\mu x_\mu$. Finally, the obtained relations for each order in $\mu$ are vector equations. After choosing a right basis for the entire space (usually the eigen vectors of $A$), we project along the different basis vectors each relation. Projecting along $x$  gives the $\lambda^{(i)}$ terms and along others basis vectors gives the different coordinates of the vector $x^{(i)}$. For details of the expressions used in quantum mechanics, we refer the reader to an MIT course \cite{zwiebach:2018}. \\

\noindent In this section, we present two concepts: one is how we can use a quantum evolution to compute a solution to an optimization problem, and the second is how to study the evolution of some variables under small perturbation via a perturbative analysis. In the next section, we apply the perturbation analysis directly to QA and see how this helps us to talk about anti-crossing and qualify their occurrences.

\section{Perturbative Analysis to QA}
\label{sec:3}

In this section, we apply the perturbative analysis presented in the previous section to the quantum annealing process. This idea has already been explored by other authors \cite{altshuler2010anderson,Werner:2023zsa} to derive different results and intuitions about the evolution. The perturbative analysis can be naturally applied at the beginning ($s(0)=0$) and at the end ($s(T)=1$) of the evolution. Typically, we know the diagonalizing basis of the Hamiltonians $H_0$ and $H_1$ which allows us to deduce relevant features about the process. Building on the work of \cite{Werner:2023zsa}, we develop here an expansion of the energy $E_0^I$ of the initial state $|\psi_0\rangle$, i.e. the ground state of $H_0$ and one for the energies $E_{gs}$ of the ground state $|GS\rangle$ and $E_{fs}$ of the first excited state $|FS\rangle$ of $H_1$, supposing that the first excited subspace of $H_1$ is degenerated. We are interested in the occurrence of AC which is directly related to the behavior of the instantaneous eigenenergies. Recall that AC refers to the point where the gap is closing exponentially fast, i.e. when the two lowest instantaneous eigenenergies are getting exponentially close to each other. Intuitively, the energy curves almost cross but change directions just before. The expansions of the energies are detailed below in the different subsections. 

Let us set the time-dependent Hamiltonian on which we work. We need to define $H_0$, $H_1$ and the trajectory $s(t)$. We choose to stay in the standard setting of QA for solving classical optimization problems defined over the bitstrings of size $n$ where $s(t)=t/T$, $H_0=-\sum_i \sigma_x^{(i)}$ where the sum is over the $n$ qubits of the considered quantum system and $H_1=\text{diag}(E_x)_{x\in \{0,1\}^n}$. $E_x$ is the value for a classical $n-$ bitstring $x$ of the function we want to optimize, i.e. if $C$ is a cost function to minimize, $C(x)=E_x$. We detail an example with MaxCut problem in section \ref{sec:4}. From this setting, we know that $|\psi_0\rangle$ is the uniform superposition over all bitstrings and the associated eigenspace is non-degenerated. We further assume that the ground space of $H_1$ is non-degenerated as well, i.e. $\exists  ! i, E_i=E_{gs}$ where $E_{gs}$ is the ground state energy (i.e. the optimal value of the target problem) while the first excited subspace is degenerate, i.e. $\exists i\neq j, E_i=E_j=E_{fs}>E_{gs}$, with $E_{fs}$ being the value of the first eigenenergy of $H_1$ above $E_{gs}$. 

We now introduce different graphs that help us to better visualize some quantities. As defined above, $H_0$ can be seen as the negative adjacency matrix of an $n-$regular graph. If each node represents a bitstring $x$, this state is connected to another one $y$ via $H_0$ if $y$ is exactly one bitflip ($\sigma_x$ operation) away from $x$. For any bitstring of size $n$, there are exactly $n$ possible bitflips. $-H_0$ represents the search graph which is the hypercube in dimension $n$ among all possible solutions $x$.  We can isolate the nodes that belong to the degenerated first excited subspace of energy $E_{fs}^T$ among all $x$, i.e. $\text{Loc}=\{y \in \{0,1\}^n|E_y=E_{fs} \}$ and we can define the graph induced by those states $\text{Loc}$ in $-H_0$. We call $G_{loc}$ this subgraph that corresponds to the local minima of the optimization problem. An example of $G_{loc}$ in the 5-cube is shown on Figure \ref{fig:gh0_cycle}. We use MaxCut on a cycle to generate this example, we give the details in the next section and in Appendix \ref{ssec:cycle}. To visualize the landscape of such a graph, we draw in Figure \ref{fig:Cf_cyle} a schematic 2D plot of the objective function $C(x)$ which is also the energy landscape of $H_1$. 
In the example of Figure  \ref{fig:gh0_cycle}, we see that the optimal state $x_{opt}=|GS\rangle$ is entirely linked to $G_{loc}$ and there is no component of $G_{loc}$ far from it, i.e. with a potential barrier in between. This idea is conveyed in Fig \ref{fig:Cf_cyle} by the absence of green parts between the red and blue sections.

\begin{figure}
    \centering
    \includegraphics[scale=0.6,trim={0 3cm 0 2.8cm},clip]{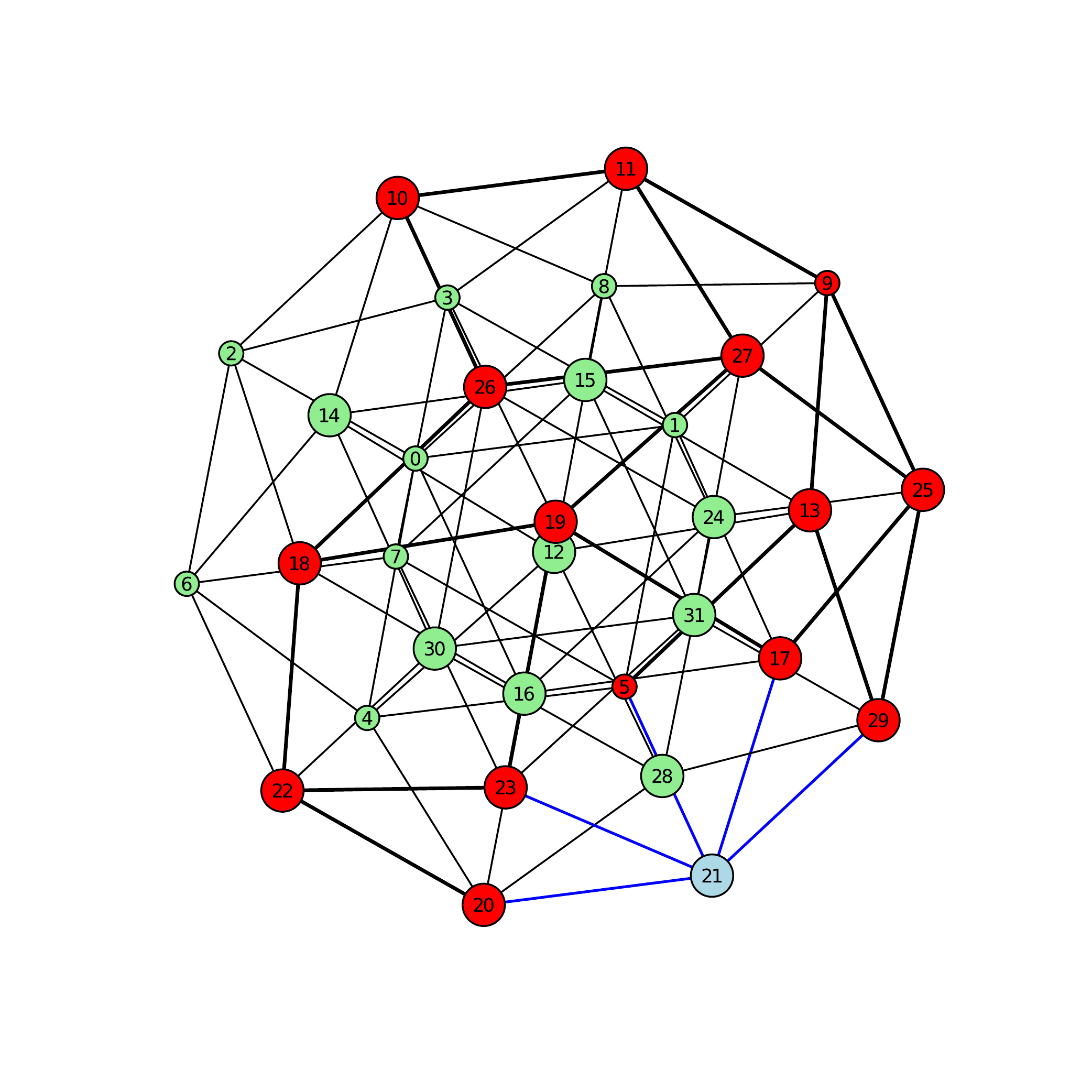}
    \caption{A 5-cube with $G_{loc}$ highlighted with red nodes and thick black edges. Lightblue node is the unique ground state and blue edges show the connection between $G_{loc}$ and the ground state. Green nodes are all the other possible states with higher energies. The labels, once converted in binary, represent the state configuration.}
    \label{fig:gh0_cycle}
\end{figure}

\begin{figure}
    \centering
    \includegraphics[scale=0.6]{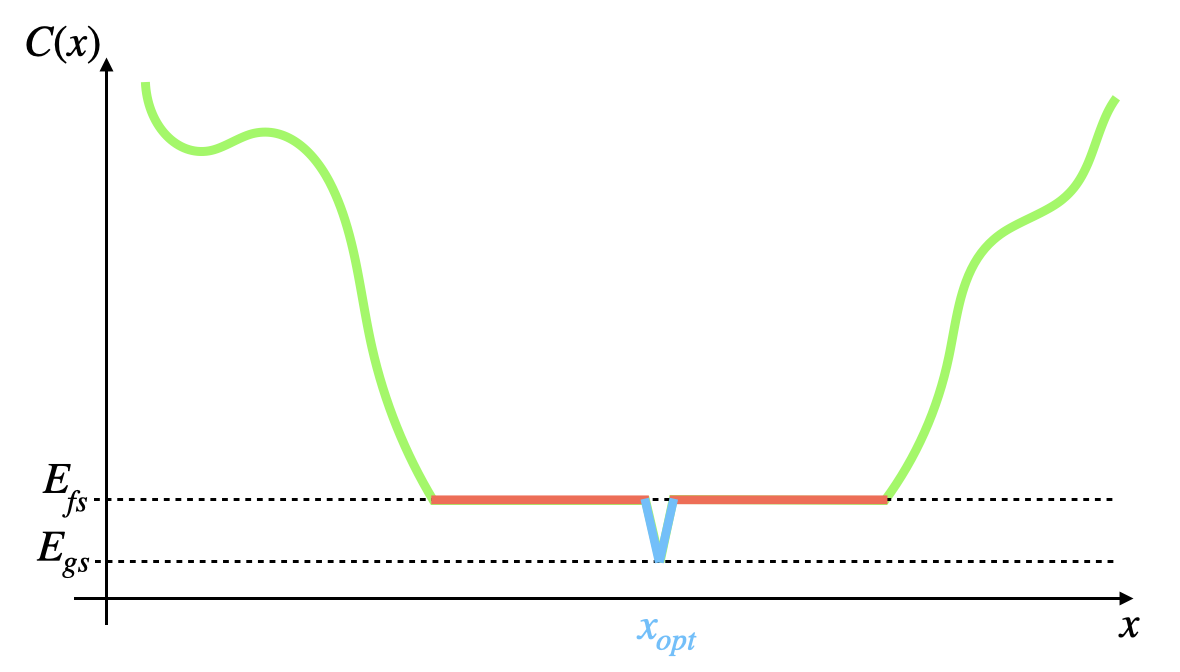}
    \caption{Schematic energy landscape of $H_1$ corresponding to Figure  \ref{fig:gh0_cycle}. $G_{loc}$ has only one component and is strongly connected to the ground state $x_{opt}$.}
    \label{fig:Cf_cyle}
\end{figure}

In the rest of the section, we detail the perturbation expansions and how we can articulate them to derive a condition on the occurrence of the anti-crossing during quantum annealing. More precisely, we will prove the following theorem:
\begin{theorem}\label{th:main} 
Under perturbative expansion validity, if $\lambda_0(\text{loc})$ is the largest eigenvalue of the adjacency matrix of $G_{loc}$ and $H_1$ has a unique ground state and a degenerated first eigenspace, we use a linear interpolation between $H_0$ and $H_1$ as defined above, then by defining 
\begin{align*}
    s_{lg} &= \frac{\lambda_0(\text{loc})}{\Delta H_1 + \lambda_0(\text{loc})} = \frac{1}{1+\frac{\Delta H_1}{\lambda_0(\text{loc})}}\\ 
\end{align*}
and $$\alpha_T=\frac{\Delta H_1}{\langle H_1 \rangle_0 -E_{gs}}$$ where $\Delta H_1 =E_{fs} -E_{gs}$ and $\langle H_1 \rangle_0$ is the mean of $H_1$'s eigenvalues, we can say that an anti-crossing happens at $s_{lg}$ if $\lambda_0(\text{loc}) > n\alpha_T$. No anti-crossing occurs if $\lambda_0(\text{loc}) < n\alpha_T$.
\end{theorem}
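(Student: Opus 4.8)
The plan is to build three \emph{diabatic} energy branches, each produced by first-order perturbation theory about one endpoint of the interpolation, and to detect the anti-crossing by comparing the parameter values at which these branches meet. Near $s=0$ I would write $H(s)=(1-s)\bigl(H_0+\tfrac{s}{1-s}H_1\bigr)$, so the small parameter is $\mu=\tfrac{s}{1-s}$ and the perturbation is $\mu H_1$; near $s=1$ I would write $H(s)=s\bigl(H_1+\tfrac{1-s}{s}H_0\bigr)$, so the small parameter is $\nu=\tfrac{1-s}{s}$ and the perturbation is $\nu H_0$. To leading order every branch below is then affine in $s$.

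First I would record the \emph{delocalized branch}: because $|\psi_0\rangle$ is the uniform superposition, $\langle\psi_0|H_1|\psi_0\rangle=\langle H_1\rangle_0$, so the level attached to $|\psi_0\rangle$ carries energy $E_0^I(s)=-(1-s)\,n+s\,\langle H_1\rangle_0$. Next the \emph{ground-state branch}: $\langle GS|H_0|GS\rangle=0$ since $H_0$ has vanishing diagonal, so the level attached to $|GS\rangle$ is $E_{gs}(s)=s\,E_{gs}+\mathcal{O}\bigl((1-s)^2\bigr)$. Finally the \emph{local-minimum branch}: the first-excited eigenspace of $H_1$ is the degenerate span of $\{\,|y\rangle:y\in\text{Loc}\,\}$, and the restriction of the perturbation $H_0$ to it is exactly $-A(G_{loc})$, where $A(G_{loc})$ is the adjacency matrix of $G_{loc}$; degenerate perturbation theory splits this subspace by the eigenvalues of $-A(G_{loc})$, and since $-\lambda_0(\text{loc})$ is the smallest of them, the lowest level of the subspace is $E_{loc}(s)=s\,E_{fs}-(1-s)\,\lambda_0(\text{loc})$.

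Then I would locate the crossings. Solving $E_{gs}(s)=E_{loc}(s)$ gives $s\bigl(\Delta H_1+\lambda_0(\text{loc})\bigr)=\lambda_0(\text{loc})$, i.e.\ $s=s_{lg}$; solving $E_0^I(s)=E_{gs}(s)$ gives the point $s_{del}:=\bigl(1+\tfrac{\langle H_1\rangle_0-E_{gs}}{n}\bigr)^{-1}$ where the delocalized branch meets the ground-state branch (note $\langle H_1\rangle_0>E_{gs}$ as $H_1$ has a unique ground state). Comparing the two closed forms, $s_{lg}>s_{del}$ holds iff $\tfrac{\Delta H_1}{\lambda_0(\text{loc})}<\tfrac{\langle H_1\rangle_0-E_{gs}}{n}$, that is iff $\lambda_0(\text{loc})>n\alpha_T$, while $\lambda_0(\text{loc})<n\alpha_T$ is equivalent to $s_{lg}<s_{del}$. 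I would then read off the dichotomy: when $\lambda_0(\text{loc})>n\alpha_T$, tracing the instantaneous ground state backward from $s=1$ it coincides with the $|GS\rangle$-branch only down to $s_{lg}$, below which the lowest diabatic level is the local-minimum branch, so the state prepared at $s=0$ and the target $|GS\rangle$ at $s=1$ lie on distinct smooth branches and the minimum gap sits near $s_{lg}$; that gap is exponentially small because, by Perron--Frobenius, the eigenvector of $A(G_{loc})$ for $\lambda_0(\text{loc})$ is spread over the exponentially large set $\text{Loc}$, so its coupling to the single basis state $|GS\rangle$ is exponentially small. When instead $\lambda_0(\text{loc})<n\alpha_T$, the local-minimum branch never drops below both other branches, the delocalized branch connects smoothly to $|GS\rangle$, and no anti-crossing forms.

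I expect the main obstacle to be twofold. First, one must justify using the first-order affine branches at the \emph{interior} point $s_{lg}\in(0,1)$ rather than only infinitesimally near the two endpoints; this is exactly the ``perturbative expansion validity'' hypothesis, and making it quantitative amounts to bounding the ratio of consecutive terms of the two expansions in a neighborhood of $s_{lg}$. Second, and harder, is certifying that the level meeting at $s_{lg}$ is a genuine anti-crossing, i.e.\ a gap closing \emph{exponentially} in $n$ rather than a regular or merely ``wide'' crossing: this needs the degeneracy of the first-excited subspace together with the delocalization of its Perron eigenvector to force an exponentially small effective coupling to $|GS\rangle$, and it also needs a check that no earlier level meeting --- notably the one between the delocalized and local-minimum branches --- produces the minimum gap, which should follow by arguing that crossing is ``wide'' since along it the two states are not exponentially separated on the hypercube.
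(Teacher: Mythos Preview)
Your proposal is correct and follows essentially the same argument as the paper: build the three first-order affine branches $E_{\text{deloc}}$, $E_{\text{glob}}$, $E_{\text{loc}}$ (the last via degenerate perturbation theory, diagonalizing the restriction of $H_0$ to the first-excited subspace by the eigenbasis of $A(G_{loc})$), then read off the anti-crossing criterion from their pairwise intersections. The only cosmetic difference is that the paper compares $s_{dl}$ (deloc--loc) with $s_{dg}$ (deloc--glob) while you compare $s_{lg}$ with $s_{dg}=s_{del}$; for three affine lines these orderings are equivalent and both reduce to $\lambda_0(\text{loc})>n\alpha_T$.
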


This forms a general condition on the occurrence of an anti-crossing during a quantum process with the assumptions of the theorem. We see that the $\alpha_T$ parameter depends only on the problem $H_1$ while $\lambda_0(\text{loc})$ is mixing $H_0$ and $H_1$. We observe from this result that the potential occurrence time of an AC around $s_{lg}$ is ruled by the ratio $\frac{\Delta H_1}{\lambda_0(\text{loc})} $. In practice, this result can help computer scientists to design appropriate schedules by slowing the evolution around the AC. However, the $\lambda_0(\text{loc})$ parameter can be complicated to compute. It encodes the centrality of $G_{loc}$ and can be interpreted as the importance of the graph. To tackle this we use a result from graph theory \cite{graphEig} that bounds the largest eigenvector of a graph by : $\deg_{\text{avg}}(\text{loc}) \leq \lambda_0(\text{loc}) \leq \deg_{\max}(\text{loc})$. Where $\deg_{\text{avg}}(\text{loc})$ and $\deg_{\max}(\text{loc})$ denote the average and maximum degree of $G_{loc}$ respectively. We can derive the following more practical corollary:

\begin{corollary}
By introducing, 
\begin{align*}
    s_{lg}^+ &= \frac{\deg_{\text{avg}}(\text{loc})}{\Delta H_1 + \deg_{\text{avg}}(\text{loc})} \\
    s_{lg}^- &= \frac{\deg_{\max} (\text{loc})}{\Delta H_1 + \deg_{\max} (\hbox{loc})} 
\end{align*}
we can distinguish three regimes :
\begin{enumerate}
    \item[-] AC occurs in the interval $[s_{lg}^+,s_{lg}^-]$ if $\deg_{\text{avg}}(\text{loc}) > n\alpha_T$;
    \item[-] NO-AC occurs if $\deg_{\max}(\text{loc}) < n\alpha_T$;
    \item[-] UNDEFINED if $ \deg_{\max} (\text{loc})>n\alpha_T> \deg_{\text{avg}} (\text{loc})$.
\end{enumerate} 
\end{corollary}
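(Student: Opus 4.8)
The plan is to obtain the corollary as an immediate consequence of Theorem~\ref{th:main} together with the spectral sandwich $\deg_{\text{avg}}(\text{loc}) \le \lambda_0(\text{loc}) \le \deg_{\max}(\text{loc})$ borrowed from \cite{graphEig}; no new analysis of the annealing itself is needed. First I would record that $\Delta H_1 = E_{fs} - E_{gs} > 0$ by the standing assumption that the first excited subspace of $H_1$ lies strictly above its non-degenerate ground space, so the map $f(x) = \frac{x}{\Delta H_1 + x} = \frac{1}{1 + \Delta H_1/x}$ is strictly increasing on $(0,\infty)$ and takes values in $(0,1)$. Applying $f$ to the spectral sandwich gives $s_{lg}^+ = f(\deg_{\text{avg}}(\text{loc})) \le f(\lambda_0(\text{loc})) = s_{lg} \le f(\deg_{\max}(\text{loc})) = s_{lg}^-$, so whenever Theorem~\ref{th:main} predicts an anti-crossing at $s_{lg}$, that point automatically lies in $[s_{lg}^+, s_{lg}^-]$.

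Then I would split into the three regimes. If $\deg_{\text{avg}}(\text{loc}) > n\alpha_T$, the lower bound $\lambda_0(\text{loc}) \ge \deg_{\text{avg}}(\text{loc})$ forces $\lambda_0(\text{loc}) > n\alpha_T$, so Theorem~\ref{th:main} yields an anti-crossing, located at $s_{lg} \in [s_{lg}^+, s_{lg}^-]$ by the previous step. If $\deg_{\max}(\text{loc}) < n\alpha_T$, the upper bound $\lambda_0(\text{loc}) \le \deg_{\max}(\text{loc})$ forces $\lambda_0(\text{loc}) < n\alpha_T$, so Theorem~\ref{th:main} excludes any anti-crossing. In the remaining case $\deg_{\max}(\text{loc}) > n\alpha_T > \deg_{\text{avg}}(\text{loc})$, the sandwich only places $\lambda_0(\text{loc})$ somewhere in an interval containing $n\alpha_T$, so neither hypothesis of Theorem~\ref{th:main} can be checked from the degree data alone; this is the ``UNDEFINED'' regime.

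There is no genuine obstacle here --- the argument is essentially one line per case. The only points needing a little care are the strict positivity of $\Delta H_1$, which underpins both the monotonicity of $f$ and the well-definedness of $s_{lg}^{\pm}$ in $(0,1)$, and the remark --- inherited directly from Theorem~\ref{th:main} --- that the boundary equalities $\deg_{\text{avg}}(\text{loc}) = n\alpha_T$ and $\deg_{\max}(\text{loc}) = n\alpha_T$ fall outside the stated trichotomy and are left uncovered on purpose.
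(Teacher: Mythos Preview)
Your proposal is correct and follows exactly the paper's route: the paper also derives the corollary ``immediately'' from Theorem~\ref{th:main} by plugging the spectral sandwich $\deg_{\text{avg}}(\text{loc}) \le \lambda_0(\text{loc}) \le \deg_{\max}(\text{loc})$ from~\cite{graphEig} into the expressions for $E_{\text{loc}}(s)$ and $s_{lg}$. If anything, your write-up is slightly more explicit than the paper's, since you spell out the monotonicity of $x \mapsto x/(\Delta H_1 + x)$ and the case split, whereas the paper leaves these implicit.
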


This corollary gives an interval where an AC may occur. Furthermore, it will help anyone who wants to study the different regimes when applying to a specific problem as we do with MaxCut in the next section. In any case, this analytical result is derived from the perturbative theory and the validity of the truncation used needs to be checked. We suggest a validation for MaxCut in Appendix \ref{app:val}. Now let us detail the proof of the theorem. \\



\subsection{Initial perturbation}
At the beginning of the evolution, we know that we start from the ground state of $H_0$ with energy $E_0^I$, i.e. $H_0|\psi_0\rangle =E_0^I|\psi_0\rangle$. We are interested in how it changes while perturbing $H_0$ with some $H_1$. More formally, let us look at the modified Hamiltonian $\tilde{H}(\varepsilon)=H_0+\varepsilon H_1$ which is obtained by dividing the original Hamiltonian by $(1-s)$ and setting $\varepsilon = \frac{s}{1-s}$. If we call $E_{\text{deloc}}(\varepsilon)$, 'deloc' for delocalized state, the ground state energy of $\tilde{H}(\varepsilon)$, by perturbative analysis with non-degenerated subspace, the first-order expansion is :
\begin{align*}
    E_{\text{deloc}}(\varepsilon) &= E_0^{(0)} + \varepsilon E_0^{(1)} \\
    &= \langle \psi_0|H_0 |\psi_0 \rangle + \varepsilon \langle \psi_0|H_1 |\psi_0 \rangle \\
    &= E_0^I + \varepsilon \langle H_1 \rangle_0
\end{align*}
where $E_0^I=-n$ and the associated state $|\psi_0\rangle$ is a uniform superposition among all bitstrings. Hence, $\langle H_1 \rangle_0$ represents the mean of all possible values of the optimization problem, encoded in $H_1$. Therefore, in the $s$ frame, we end up with :
\begin{align}
    E_{\text{deloc}}(s)=-(1-s)n+s\langle H_1 \rangle_0
\end{align}

\subsection{Final perturbation}

At the end of the evolution, we know that the ideal case is if the state overlaps largely with the final ground state. However, the occurrence of an anti-crossing may lead to a significant overlap with the first excited state. So we focus our interest on the energy's behavior ending in $E_{gs}$ and $E_{fs}$ while it is perturbed by $H_0$. More formally, let us look at the modified Hamiltonian $\bar{H}(\lambda)=H_1+\lambda H_0$ which is obtained by dividing the original Hamiltonian by $s$ and setting $\lambda = \frac{1-s}{s}$. \\

\noindent We first focus on the behavior of the ground state. We know that $H_1|GS\rangle=E_{gs}|GS\rangle$. If we call $E_{\text{glob}}(\lambda)$, 'glob' for global minima, the ground state energy of $\Bar{H}(\lambda)$, by perturbative analysis with non-degenerated subspace, the first order expansion is: 
\begin{align*}
    E_{\text{glob}}(\lambda) &= E_{gs}^{(0)} + \lambda E_{gs}^{(1)} \\
    &= \langle GS|H_1 |GS \rangle + \lambda \langle GS|H_0 |GS \rangle \\
    &= E_{gs} 
\end{align*} 
Recall that $E_{gs}$ is the optimal value of the optimization problem we look at and the associated eigenspace is non-degenerated. So $|GS\rangle$ is a quantum state that encodes a classical bitstring optimal solution to the problem. In other words, $|GS\rangle$ is a vector of the canonical basis of the Hilbert space and then $\langle GS|H_0 |GS \rangle$ is a diagonal element of $H_0$ which is all 0. Therefore in the $s$ frame, we end up with:
\begin{align}
    E_{\text{glob}}(s) = sE_{gs}^T
\end{align} 

Secondly, we focus on the evolution of the first excited state. However, we supposed that this subspace is degenerated so we need to be more precise about which state we want to study. Let $|FS,k \rangle$ denotes the $k^{th}$ eigenstate of the degenerate eigenspace of $H_1$, by definition $H_1|FS,k \rangle=E_{fs}^T|FS,k \rangle$. If we keep the usual bitstring basis among the degenerated subspace, the first order term $\langle FS,k|H_0 |FS,k \rangle$ will still be 0 and the degeneracy is not lifted. The states $|FS,k \rangle$ can be ordered by continuity of the non-degenerate instantaneous energy landscape of $H(s)$ and thus $\Bar{H}(\lambda)$ also. Therefore we focus on the energy evolution of the state $|FS,0 \rangle$. If we call $E_{\text{loc}}(\lambda)$  the first excited state energy of $\Bar{H}(\lambda)$, by perturbative analysis with non-degenerated subspace, the first order expansion is : 

\begin{align*}
    E_{\text{loc}}(\lambda) &= E_{fs,0}^{(0)} + \lambda E_{fs,0}^{(1)} \\
    &= \langle FS,0|H_1 |FS,0 \rangle + \lambda \langle FS,0|H_0 |FS,0 \rangle \\
    &= E_{fs}^T  + \lambda \langle FS,0|H_0 |FS,0 \rangle
\end{align*} 
 To lift the degeneracy at first-order, we need to find a ``good'' basis $|FS,k \rangle$ for which $ \forall k \geq 1, \langle FS,0|H_0 |FS,0 \rangle < \langle FS,k|H_0 |FS,k \rangle$. We take as basis vectors $|FS,k \rangle$ of the degenerate eigenspace the eigenvectors of $G_{loc}$'s adjacency matrix $A_{loc}$. With this notation, $A_{loc}|FS,k \rangle = \lambda_k |FS,k \rangle$ where we ordered $\lambda_0 > \lambda_1 \geq \lambda_2 \geq ... $ and finally $\langle FS,k|H_0 |FS,k \rangle=-\lambda_k$ by construction. This ensures to lift the degeneracy if the largest eigenvalue of $A_{loc}$ is unique. This happens if $G_{loc}$ has a unique major component which we suppose. Note that if $G_{loc}$ is composed only of isolated nodes, intuitively, they become as difficult as the ground state to find by QA unless there are exponentially of them, we assume from now that this is not the case. Hence, $\lambda_0$ is unique and in the $s$ frame, we end up with:
\begin{align}
    E_{\text{loc}}(s) = sE_{fs}^T -(1-s)\lambda_0
\end{align} 
From \cite{graphEig}, we can bound the largest eigenvector of a graph by : $\deg_{\text{avg}}(\text{loc}) \leq \lambda_0 \leq \deg_{\max}(\text{loc})$, where $\deg_{\text{avg}}(\text{loc})$ and $\deg_{\max}(\text{loc})$ denote the average and maximum degree of $G_{loc}$ respectively. Consequently, we can use the following more practical bounds on $E_{\text{loc}}(s)$:
\begin{align}
    E_{\text{loc}}(s) &\geq sE_{fs}^T -(1-s)\deg_{\max} (\text{loc}) =E_{\text{loc}}^-(s)\\
    E_{\text{loc}}(s) &\leq sE_{fs}^T -(1-s)\deg_{\text{avg}} (\text{loc})=E_{\text{loc}}^+(s)
\end{align}

\subsection{Energy crossing}

\begin{figure}
    \centering
    \includegraphics[scale=0.57]{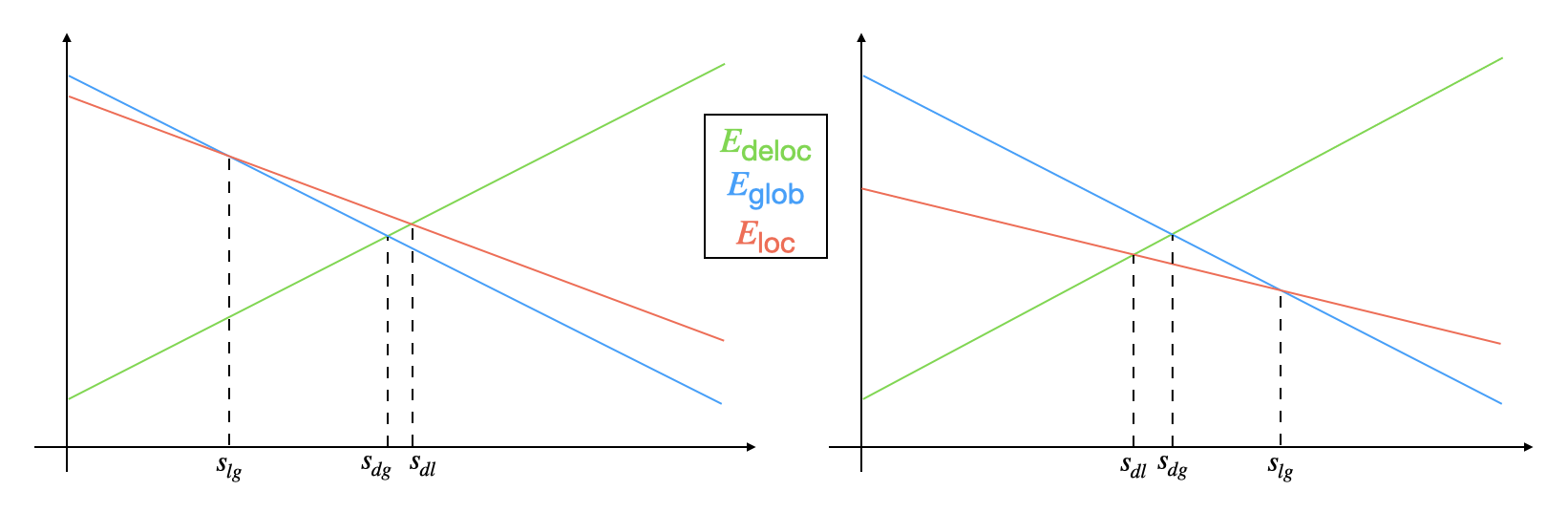}
    \caption{Schematic behavior of the three energy expansions. (left) a case with no AC and (right) case with AC.}
    \label{fig:energy_cross}
\end{figure}
We are set up to distinguish different regimes in which avoided crossing occurs or not. The state starts in the delocalized situation, as $|\psi_0\rangle$ is the uniform superposition, with energy $E_{\text{deloc}}$. If it crosses $E_{\text{glob}}$ first, it then follows the global minima trajectory to ``easily'' reach the final ground state. If it crosses $E_{\text{loc}}$ first, it then follows the local minima trajectory and at some point, it will cross $E_{\text{glob}}$ after and an anti-crossing will occur at this instant. Hence, the two times of interest of the dynamic are $s_{dg}$, defined such that $E_{\text{deloc}}(s_{dg})=E_{\text{glob}}(s_{dg})$, and $s_{dl}$, defined such that $E_{\text{deloc}}(s_{dl})=E_{\text{loc}}(s_{dl})$. If $s_{dl}<s_{dg}$, then an anti-crossing occurs at a time $s_{lg}$ verifying that $E_{\text{loc}}(s_{lg})=E_{\text{glob}}(s_{lg})$. Figure \ref{fig:energy_cross} shows the possible behaviors of the energy expansions. In this plot, we considered that $E_{gs}^T<E_{fs}^T<0$. The slope of the curve $E_{\text{loc}}$ depends in addition on $\lambda_0$, the largest eigenvalue of $G_{loc}$. A larger $\lambda_0$ moves the sign of the slope toward the positive value making $E_{\text{loc}}$ crosses $E_{\text{deloc}}$ before $E_{\text{glob}}$ all others things being equal. This situation (right) will create an AC during the annealing. It is important to note that a large $\lambda_0$ means great connectivity in the graph $G_{loc}$ (or at least in its major component). In other words, it means that the local minima are wide in the mixing graph $H_0$ which makes QA struggle to converge toward the global minima. We understand from this observation that this construction only works if the major component of $G_{loc}$ is not connected to the ground state.

We can derive the explicit expression for $s_{dg},s_{dl}$ and $s_{lg}$ as follow:

\begin{align*}
    s_{dg} &= \frac{n}{n+\langle H_1 \rangle_0 -E_{gs}^T} \\
    s_{dl} &= \frac{n-\lambda_0}{n-\lambda_0+\langle H_1 \rangle_0 -E_{fs}^T } \\
    s_{lg} &= \frac{\lambda_0}{\Delta E^T + \lambda_0} = \frac{1}{1+\frac{\Delta E^T}{\lambda_0}}\\
\end{align*}
We note $$\alpha_T=\frac{\Delta H_1}{\langle H_1 \rangle_0 -E_{gs}^T}$$ where $\Delta H_1=E_{fs}^T-E_{gs}^T$, a parameter that depends only on the problem $H_1$ we want to solve. And so AC occurs at $s_{lg}$ if $s_{dl} < s_{dg}$ i.e. if $\lambda_0 > n\alpha_T$. This concludes the proof of our theorem. 

The corollary immediately follows by using $E_{loc}^-(s)$ and $E_{loc}^+(s)$. The undefined regime is then when $s_{dg} \in [s_{dl}^-,s_{dl}^+]$ because we cannot discriminate between which curve the delocalized energy will cross first. 

This result is quite general for many targets Hamiltonians, but we still need two conditions: the ground state must be unique and the first excited subspace is degenerated. \\

\noindent In this section, we apply the perturbative analysis to QA and show in the assumption where the ground state of $H_1$ is unique and its first excited subspace is degenerated, that anti-crossings may occur during annealing given a condition to satisfy that depends on $G_{loc}$ and $H_1$. We also give a corollary which relaxed the condition of the theorem to be more useful when applying to a specific problem. In the next section, we show such an application in the case of MaxCut on bipartite graphs.

\section{Application to MaxCut}
\label{sec:4}

In this section, we apply the last theorem to the MaxCut problem. Given a graph $G(V,E)$, the goal of MaxCut is to partition its node set $V$ into two parts $L$ and $R$ in order to maximize the number of cut edges, i.e., of edges with an endpoint in $L$ and the other in $R$. Such partitions are classically encoded by a bitstring of size $n = |V|$, the $i^{th}$ bit being set to $0$ if node $i \in L$, and to $1$ if $i \in R$. We define our target Hamiltonian as $H_1=-\sum_{(ij) \in E}\frac{1-\sigma_z^{(i)} \sigma_z^{(j)}}{2}$. This Hamiltonian (and the corresponding MaxCut cost function) has a trivial symmetry: any solution can be turned into a solution with the same cost by bit-flipping all its entries. Consequently, $H_1$ has a degenerated ground state. We can break down this symmetry by forcing an arbitrary bit (say the first one) to $0$ and updating $H_1$ accordingly. 

To ensure that the two conditions of our theorem are met, we need to choose a class of graphs such that the ground state is non-degenerated (after breaking the trivial symmetry). Connected bipartite graphs obviously respect this property and we focus on them in the rest of the section. We will in particular show that the first excited subspace is degenerated.
Also this class allows us to explicitly determine the parameter and the graph . This will help us to determine the existence (or not) of ACs while solving MaxCut on these graphs with QA

\subsection{d-regular bipartite graphs}

We first restrict the bipartite graphs on being $d$-regular and we will show that no AC appears during the evolution by using the result of the corollary: $\deg_{\max}(\text{loc})<n\alpha_T$. Leading to the following theorem :

\begin{theorem}[NO AC - d-regular bipartite graphs]
\label{thm4}
    Quantum Annealing efficiently solves MaxCut on $d-$regular bipartite graphs.
\end{theorem}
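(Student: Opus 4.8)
The plan is to apply the NO-AC branch of the Corollary, which requires showing $\deg_{\max}(\mathrm{loc}) < n\alpha_T$ for $d$-regular bipartite graphs (after breaking the trivial bit-flip symmetry by fixing the first vertex to $0$). This reduces the theorem to three quantitative tasks: (i) identify the ground state and first excited energy of $H_1$ to get $\Delta H_1 = E_{fs}-E_{gs}$; (ii) compute $\langle H_1\rangle_0$, the average of the cost over all bitstrings, to get $\alpha_T = \Delta H_1/(\langle H_1\rangle_0 - E_{gs})$; and (iii) bound $\deg_{\max}(\mathrm{loc})$, the maximum degree of the subgraph $G_{loc}$ of the hypercube induced by the set of first-excited (i.e.\ second-best cut) configurations. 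Because the graph is bipartite, MaxCut is exactly $|E| = nd/2$, achieved uniquely by the bipartition $(L,R)$ itself (once the symmetry is broken, this is the unique optimum since the graph is connected). The second-best cuts are obtained by moving vertices to the ``wrong'' side; the cheapest such move flips a single vertex, losing exactly $d$ cut edges and gaining none (bipartite, so no edge inside a part is created), giving $E_{fs} = nd/2 - d$ and hence $\Delta H_1 = d$ in the minimization convention $E_x = -(\text{cut value})$, i.e.\ $\Delta H_1 = d$.

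Next I would nail down $G_{loc}$. A configuration realizing the second-best cut is exactly ``the optimal bipartition with one vertex flipped to its wrong side,'' so $\mathrm{Loc}$ has size $n$ (one representative per flipped vertex, modulo the already-broken global symmetry — care is needed with the fixed bit, but generically $\mathrm{Loc}$ has $\Theta(n)$ elements). Two such configurations are adjacent in the hypercube iff they differ in exactly one bit; but two distinct ``single-vertex-flipped'' configurations differ in exactly two bits (the two flipped vertices), so they are \emph{never} adjacent. Hence $G_{loc}$ is an independent set: $\deg_{\max}(\mathrm{loc}) = 0$. Then $\deg_{\max}(\mathrm{loc}) = 0 < n\alpha_T$ holds trivially (as long as $\alpha_T > 0$, which it is since $\langle H_1\rangle_0 > E_{gs}$ for any non-edgeless graph), so the Corollary's NO-AC regime applies and no anti-crossing occurs. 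With $\Delta_{\min}$ bounded below by an inverse polynomial, the adiabatic theorem then gives a polynomial-time quantum annealing algorithm, which is the asserted ``efficiently solves.''

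The main obstacle I anticipate is not the arithmetic but the bookkeeping around the degeneracy assumptions and the symmetry-breaking: I need to check that, after fixing the first bit, the ground state really is unique (connectedness of the bipartite graph handles this) and that the first excited subspace really is degenerate with $G_{loc}$ having a well-defined ``major component'' — here that component is just a collection of isolated nodes, which the excerpt flags as a borderline case (the remark about $G_{loc}$ being ``only isolated nodes'' needing $\Theta(\exp)$ many of them). Since $|\mathrm{Loc}| = \Theta(n)$ is only polynomial, the hand-wavy ``they become as difficult as the ground state unless exponentially many'' concern does not bite, and the perturbative picture degenerates gracefully to ``$E_{\mathrm{loc}}$ has slope $0$ in the $\lambda$-frame,'' i.e.\ $E_{\mathrm{deloc}}$ crosses $E_{\mathrm{glob}}$ strictly before reaching $E_{\mathrm{loc}}$. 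The remaining care is to confirm $\langle H_1\rangle_0$: each edge is cut with probability $1/2$ over uniform bitstrings, so $\langle H_1\rangle_0 = -nd/4$ (minimization convention), giving $\langle H_1\rangle_0 - E_{gs} = nd/4$ and $\alpha_T = d/(nd/4) = 4/n$, so $n\alpha_T = 4 > 0 = \deg_{\max}(\mathrm{loc})$ — clean and convention-independent in sign. Finally I would invoke the Appendix validity check for the perturbative truncation to close the argument.
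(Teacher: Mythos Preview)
Your arithmetic for $n\alpha_T$ is fine and matches the paper, but the central step --- ``a configuration realizing the second-best cut is exactly the optimal bipartition with one vertex flipped'' --- is false, and with it your conclusion $\deg_{\max}(\mathrm{loc})=0$. Already for the even cycle ($d=2$) the first excited subspace contains every configuration obtained by flipping a contiguous arc of vertices (each has exactly two uncut edges), and arcs whose lengths differ by one are Hamming-adjacent; the resulting $G_{loc}$ has $\deg_{\max}(\mathrm{loc})=4$. The same happens for $d=4$: there are $4$-regular bipartite graphs with first-excited configurations flipping many vertices, again giving $\deg_{\max}(\mathrm{loc})=4$. In both cases you land in the UNDEFINED regime of the Corollary, not NO-AC. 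Even for odd $d$ your characterization of $\mathrm{Loc}$ is wrong (e.g.\ for $d=3$ one can flip $|S_A|=3$, $|S_B|=2$ with $e(S_A,S_B)=6$ and get exactly $3$ uncut edges); it just happens that $\deg_{\max}(\mathrm{loc})=0$ there for a different reason --- no vertex can have exactly $d/2$ cut and $d/2$ uncut incident edges.

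The paper's actual argument is: (i) a non-isolated vertex of $G_{loc}$ must contain a graph-node with exactly $d/2$ cut and $d/2$ uncut incident edges, so $d$ is even and $l\ge d/2$; (ii) a counting/parity argument on uncut edges across the $L/R$ split forces $l=d$, hence $n\alpha_T=4$; (iii) a configuration of $G_{loc}$-degree $k$ needs $k$ such balanced nodes contributing at least $kd/2$ uncut edge-endpoints against a total of $2d$, so $k\le 4$, and a short case check rules out $k=4$ for $d\ge 6$; (iv) for $d\in\{2,4\}$ the Corollary is inconclusive and one must use the Theorem directly, exploiting the strict inequality $\lambda_0(\mathrm{loc})<\deg_{\max}(\mathrm{loc})$ since $G_{loc}$ is not regular. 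Your shortcut bypasses all of (i)--(iv) by an incorrect description of $\mathrm{Loc}$.
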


First, we show the two following claims to give a value to $n\alpha_T$, then we show the NO-AC conditions with lemma \ref{lem:NOAC} if $d \notin \{2,4\}$. The latter two cases are detailed in Appendix \ref{app:case} where we directly use the theorem to prove the desired result. 

\begin{claim}
For $d$-regular bipartite graphs we have, $n \alpha_T=\frac{4l}{d}$, where $l\in[1,d]$ denotes the number of uncut edges in the first excited state, i.e. $E_{fs}^T=E_{gs}^T+l$.
\end{claim}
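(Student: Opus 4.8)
The plan is to compute the three quantities $E_{gs}^T$, $E_{fs}^T$, and $\langle H_1 \rangle_0$ explicitly for a connected $d$-regular bipartite graph with $n$ vertices and $m = nd/2$ edges, and then substitute into the definition $\alpha_T = \Delta H_1 / (\langle H_1 \rangle_0 - E_{gs}^T)$. First I would recall that since the Hamiltonian is $H_1 = -\sum_{(ij)\in E} \frac{1-\sigma_z^{(i)}\sigma_z^{(j)}}{2}$, the eigenvalue $E_x$ associated with a bitstring $x$ is exactly $-(\text{number of cut edges of the partition } x)$, so minimizing $H_1$ means maximizing the cut. For a connected bipartite graph the maximum cut uses \emph{all} $m$ edges (take $L,R$ to be the two sides of the bipartition), and this partition is unique up to the global bit-flip symmetry, which we have already broken; hence $E_{gs}^T = -m = -nd/2$.

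Next I would identify the first excited subspace. The statement tells us to write $E_{fs}^T = E_{gs}^T + l$, i.e.\ the first excited states are those partitions that leave exactly $l$ edges uncut, where $l$ is the smallest positive number of uncut edges achievable. The key combinatorial observation is that moving a single vertex $v$ from its "correct" side to the other side changes the cut: the $\deg(v) = d$ edges at $v$ that were cut become uncut, and no previously-uncut edge becomes cut (in the optimal bipartition every edge was cut). So a single-vertex flip produces exactly $d$ uncut edges, giving $l \le d$; and since any non-optimal partition must leave at least one edge uncut, $l \ge 1$. (The precise value of $l$ depends on the graph, but the claim only asserts $l \in [1,d]$, so I do not need to pin it down further — I would just note $1 \le l \le d$ and that $l$ is realized by the minimal-weight "defect" configurations.)

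Then I would compute $\langle H_1 \rangle_0$, the mean of the eigenvalues of $H_1$ over the uniform superposition $|\psi_0\rangle$ — equivalently, the average of $E_x$ over all $2^{n-1}$ bitstrings (or $2^n$; the normalization cancels in the average). For a fixed edge $(ij)$, the indicator that it is cut equals $1$ on exactly half of all bitstrings, so its expectation is $1/2$; by linearity over the $m$ edges, the average number of cut edges is $m/2$, hence $\langle H_1 \rangle_0 = -m/2 = -nd/4$. Therefore $\langle H_1 \rangle_0 - E_{gs}^T = -m/2 - (-m) = m/2 = nd/4$, and with $\Delta H_1 = E_{fs}^T - E_{gs}^T = l$ we get
\begin{align*}
    n\alpha_T = n \cdot \frac{l}{nd/4} = \frac{4l}{d},
\end{align*}
which is exactly the claim. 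The only mild subtlety — the main thing to be careful about — is the bookkeeping around the broken symmetry: one must check that fixing the first bit to $0$ does not change $E_{gs}^T$ (the all-edges-cut partition still has a representative with first bit $0$), does not change the edge-cut-probability computation (conditioning on one bit still leaves each edge cut with probability $1/2$, since each edge involves two \emph{distinct} vertices and at least one of them is free — or, if the edge touches the fixed vertex, flipping the other endpoint still flips the cut status), and leaves the first excited energy gap $l$ intact. I expect that to be routine but it is where a careless argument could go wrong.
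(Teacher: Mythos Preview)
Your proof is correct and follows essentially the same route as the paper: compute $E_{gs}^T=-|E|$, $\langle H_1\rangle_0=-|E|/2$, $\Delta H_1=l\in[1,d]$, use $|E|=dn/2$ for $d$-regular graphs, and substitute. You simply supply more justification than the paper does (notably for the bound $l\le d$ via single-vertex flips and for the symmetry-breaking bookkeeping), but the computation and the ingredients are identical.
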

\noindent For bipartite graphs we have that $\langle H_1 \rangle_0=-\frac{|E|}{2}$, $E_{gs}^T=-|E|$ and $\Delta H_1=l \in [1,d]$. For regular graphs, we also have that $|E|=\frac{dn}{2}$. So $n \alpha_T=\frac{4l}{d}$ and we need to look at how $\deg_{\max} (\text{loc})$ and $\deg_{\text{avg}} (\text{loc})$ behave compare to $\frac{4l}{d}$. 

\begin{claim}
\label{claim2}
    There exist graphs with $\deg_{\max} (\text{loc})>0$ only if $l=d$. Therefore $n\alpha_T=4$.
\end{claim}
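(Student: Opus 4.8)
The plan is to show that one single edge in $G_{loc}$ already forces $l=d$; then $n\alpha_T=4l/d=4$ follows from the previous claim, while if $\deg_{\max}(\text{loc})=0$ the NO-AC condition $\deg_{\max}(\text{loc})<n\alpha_T=4l/d$ of the corollary holds for free since $l\ge 1$, so it suffices to handle the case where $G_{loc}$ has an edge. First I would fix notation: let $A\sqcup B$ be the bipartition of the connected bipartite graph $G$, so $(A,B)$ is an optimal cut and, after breaking the trivial symmetry, the unique ground state. A configuration is a set $S$ of flipped vertices, and an edge $(a,b)$ with $a\in A$, $b\in B$ is uncut exactly when precisely one of $a,b$ lies in $S$; hence the number of uncut edges of $S$ equals $|\partial S|$, the edge boundary of $S$ in $G$, so $\text{Loc}=\{S:\ 1\notin S,\ |\partial S|=l\}$ with $l$ the minimum of $|\partial S|$ over non-optimal configurations. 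Flipping a single vertex $u\neq 1$ leaves exactly its $d$ incident edges uncut, so $l\le d$. An edge of $G_{loc}$ is a pair $S,S'\in\text{Loc}$ at Hamming distance one, $S'=S\triangle\{v\}$; swapping the two if needed, assume $v\notin S$, so $S'=S\cup\{v\}$. Writing $e(X,Y)$ for the number of edges between disjoint sets $X,Y$ and $e(v,S)$ for the number of edges from $v$ into $S$, the identity $|\partial S'|=|\partial S|+d-2\,e(v,S)$ together with $|\partial S'|=|\partial S|$ forces $e(v,S)=d/2$; in particular $d$ is even (for odd $d$, $G_{loc}$ is edgeless and there is nothing to prove).

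The core of the argument is a congruence mod $d$ coming from $d$-regularity. Put $S_A=S\cap A$ and $S_B=S\cap B$, and split the $l$ boundary edges of $S$ according to the side they leave from: $P=e(S_A,B\setminus S_B)$ and $Q=e(A\setminus S_A,S_B)$, so $P+Q=l$. Counting the edges between $S_A$ and $S_B$ from each side and using $d$-regularity, $P=d|S_A|-e(S_A,S_B)$ and $Q=d|S_B|-e(S_A,S_B)$, hence $2P-l=P-Q=d(|S_A|-|S_B|)\equiv 0\pmod d$. Now feed in the flipped vertex: after relabelling the parts we may assume $v\in B$, so $v\notin S_B$ and all $d/2$ edges from $v$ into $S$ end in $S_A$; these are $d/2$ distinct edges counted by $P$, whence $P\ge d/2$, while $P\le l$ since $Q\ge 0$. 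Therefore $d-l\le 2P-l\le l$. If $l<d$ this places $2P-l$ strictly between $0$ and $d$, contradicting $2P-l\equiv 0\pmod d$. Hence $l\ge d$, and with $l\le d$ we conclude $l=d$, so $n\alpha_T=4l/d=4$.

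I expect the only real difficulty to be bookkeeping rather than ideas: correctly identifying the uncut edges of $S$ with $\partial S$ while respecting the symmetry-breaking bit and the complementation $S\leftrightarrow\overline{S}$ (which is what makes $l\le d$ and the relabellings legitimate), handling cleanly the two w.l.o.g.\ reductions ($v\in S$ versus $v\notin S$, and $v\in A$ versus $v\in B$), and checking in the two-sided count that $v$'s edges into $S$ contribute to $P$ and not to $Q$. Once the congruence $2P-l\equiv 0\pmod d$ and the bound $P\ge d/2$ are in place, the parity contradiction for $l<d$ --- and hence the claim --- is immediate.
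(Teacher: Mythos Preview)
Your argument is correct, and its core is the same edge-counting congruence modulo $d$ that the paper uses: your identity $P-Q=d(|S_A|-|S_B|)$ is exactly the paper's $r_L-r_R=kd$ (with $r_L,r_R$ playing the roles of $Q,P$). The difference is in the organisation. The paper first observes that a non-isolated vertex of $G_{loc}$ forces $l\ge d/2$, then treats the boundary case $l=d/2$ by a separate structural argument (showing the flippable node is a minimal separator and deriving a contradiction from component sizes), and only afterwards invokes the congruence for $l>d/2$. You instead feed the single observation $P\ge d/2$ directly into the congruence to pin $2P-l$ strictly between $0$ and $d$ whenever $l<d$, which handles all cases at once. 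This is more economical and shows that the paper's separator detour is in fact unnecessary; the price is that your write-up leans a bit more on the $S\leftrightarrow S'$ and $A\leftrightarrow B$ relabellings, which you should spell out carefully in a final version.
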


Recall that $G_{loc}$ is the subgraph induced by solutions of energy $E_{fs}^T$ in the hypercube $-H_0$. In full words, the vertices of  $G_{loc}$ are configurations (bitstrings) of energy $E_{fs}^T$ (so ‘‘second best'' solutions for MaxCut), and two vertices are adjacent if the corresponding bitstrings differ in exactly one bit, i.e., each one is obtained by bit flipping a single bit of the other. We denote by $\deg_{\max}(\text{loc})$ the maximum degree of $G_{loc}$. 
We know that, in the input graph $G$, there exists a partition left/right of its vertices such that all edges lie across the partition (by bipartiteness). 
Looking at one configuration of the first excited subspace, it specifies another bipartition, this time with all but $l$ edges lying across it. 
We are interested in configurations that are not isolated in $G_{loc}$ because these nodes as mentioned in Section \ref{sec:3} do not play a role in AC occurrence.
In such a configuration $x$, we want that by flipping one node (i.e. moving it to the other side of the partition), the number of uncut edges stays the same, in order to obtain a configuration $y$ that is also a vertex of $G_{loc}$. So this specific node needs to have half of its edges that is uncut and the other half that is cut in this particular configuration $x$ of the first excited subspace. This automatically restricts $l$ to be both even and larger than $d/2$.

\begin{figure}[ht]
    \centering
    \includegraphics[scale=0.6]{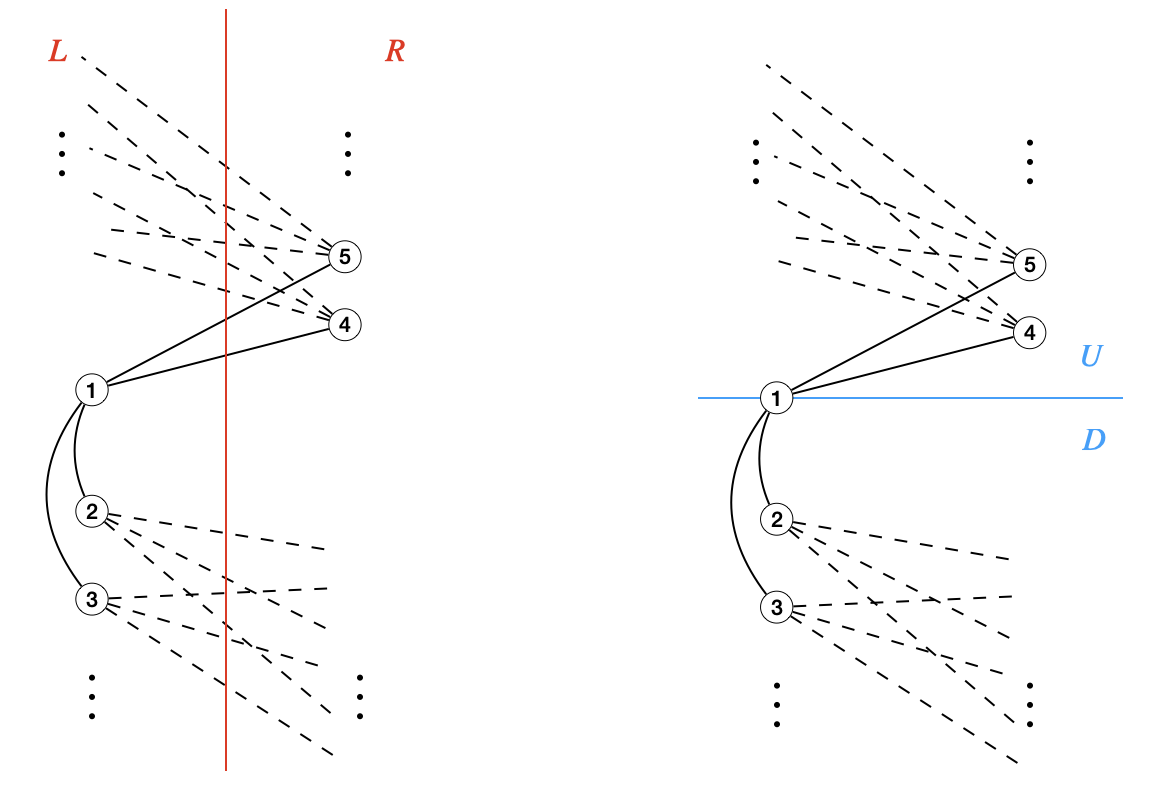}
    \caption{Construction of a specific first excited configuration. The L/R partition (left) is natural in MaxCut. The U/D partition (right) is relevant if 1 is a minimal separator.}
    \label{fig:localconf}
\end{figure}

\paragraph{Case $l=d/2$:} Let us suppose $l=d/2$. We are in a situation similar to Figure \ref{fig:localconf} (left), and see if we can create a bipartite graph from this. By supposing that $l=d/2$, it means, in the configuration of one excited state, all other edges must go from left (L) to right (R). This splits the configuration in the classical L/R partition of a cut.
Then we show the following claim that node 1 is a minimal separator of the graph which creates another split up (U) and down (D) (Fig \ref{fig:localconf} - right).

\begin{claim}
\label{claim3}
    Assume that $l=d/2$ and let us consider a configuration corresponding to a non-isolated vertex of $G_{loc}$. Then there is a node of the input graph $G$, say node 1, having $d/2$ neighbors on each side of the configuration. Moreover, this node is a minimal separator of the graph (see Figure~\ref{fig:localconf}). 
\end{claim}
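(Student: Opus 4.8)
The plan is to work with a single non-isolated vertex $x$ of $G_{loc}$, i.e., a configuration with exactly $l = d/2$ uncut edges that admits a single-bit flip staying inside $G_{loc}$. First I would exploit the hypothesis $l = d/2$ together with bipartiteness of $G$: fixing the bipartition $L/R$ of $G$, the configuration $x$ differs from it precisely because some vertices have been moved across; the uncut edges of $x$ are exactly those whose endpoints lie on the same side in $x$. Since every vertex has degree $d$, the only way a single-bit flip can preserve the count of uncut edges is for the flipped vertex, call it $1$, to have exactly $d/2$ neighbors on each side of the partition induced by $x$; this is the first assertion. I would then argue that this already forces all the uncut edges of $x$ to be incident to node $1$ — there are $l = d/2$ of them and node $1$ contributes exactly $d/2$ same-side edges — so no other uncut edge exists, and the rest of the graph splits cleanly into the ``up'' part $U$ (neighbors of $1$ on one side) and ``down'' part $D$ (neighbors on the other), as in Figure \ref{fig:localconf}.

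Next I would establish that $\{1\}$ is a separator: since every edge of $G$ other than those incident to $1$ is cut by the configuration $x$ restricted to $V \setminus \{1\}$, and the only same-side edges all touch $1$, removing $1$ leaves no edge joining $U$ to $D$; hence $G - 1$ is disconnected with $U$ and $D$ in different components (assuming both nonempty, which follows from $d/2 \ge 1$ and from $x$ being non-isolated so that node $1$ genuinely has neighbors on both sides). For minimality, I would recall that a separator $S$ is minimal if $S$ has at least two \emph{full components} (components of $G - S$ whose vertices are all adjacent to every vertex of $S$); here $S = \{1\}$ is a single vertex, so every component of $G - 1$ that contains a neighbor of $1$ is automatically full with respect to $S$, and both $U$ and $D$ contain neighbors of $1$ by construction. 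Thus $\{1\}$ has (at least) two full components and is a minimal separator.

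The step I expect to require the most care is the claim that \emph{all} $l = d/2$ uncut edges of $x$ must be incident to the flipped node $1$, rather than merely that node $1$ has $d/2$ same-side neighbors. The subtlety is ruling out configurations where node $1$ sits entirely on one side (so flipping it changes the uncut count by the net of its cut versus uncut edges) while other same-side edges elsewhere compensate; I would handle this by a careful accounting of how the uncut-edge count changes under the flip — flipping node $1$ turns its $a$ uncut incident edges into cut edges and its $d - a$ cut incident edges into uncut ones, so the count changes by $d - 2a$, forcing $a = d/2$, and then since the total uncut count is already $l = d/2 = a$, edges not incident to $1$ contribute zero. A secondary point to nail down is that $U$ and $D$ are both nonempty and that node $1$'s neighbors do indeed distribute as $d/2$ / $d/2$ across the two sides defined by $x$ (not across the original $L/R$), which is exactly what the flip condition gives; once that is in hand, the separator and minimality arguments are short.
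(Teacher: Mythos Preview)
Your accounting for why the flipped node $1$ must have exactly $d/2$ neighbors on each side, and why all $l=d/2$ uncut edges are therefore incident to $1$, is correct and matches the paper's reasoning. The minimality argument via full components is also fine.

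The gap is in your separator step. You define $U$ and $D$ as the two neighbor sets of node $1$ (one on each side of the configuration $x$) and then argue that ``removing $1$ leaves no edge joining $U$ to $D$; hence $G-1$ is disconnected''. But $U \cup D = N(1)$ does not exhaust $V \setminus \{1\}$, so the absence of a direct edge between $U$ and $D$ (which indeed follows, since both lie in the same class of the \emph{original} bipartition of $G$) does not rule out a longer path in $G-1$ from some $a \in U$ to some $b \in D$. The fact that every edge of $G-1$ is cut by $x$ only says that $G-1$ is bipartite with respect to the $L/R$ partition; it does not by itself forbid such a path, which would simply alternate between $L$ and $R$.

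What actually closes the argument is a parity consideration that uses the bipartiteness of $G$ itself, and this is exactly what the paper does: a path $P$ from $a \in N_U(1)$ to $b \in N_D(1)$ in $G-1$ would combine with the edges $\{1,a\}$ and $\{1,b\}$ into a cycle of $G$, which must be even; tracking sides along this even cycle forces at least one additional uncut edge on $P$, contradicting that all uncut edges touch $1$. Equivalently, $G-1$ admits two bipartitions (the restriction of the original one and the $L/R$ one from $x$), and $a,b$ lie on the same side of the first but opposite sides of the second, so they cannot be in the same connected component. You need one of these arguments; the current proposal skips it.
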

The configuration $x$ is such that all edges but $l=d/2$ are cut, and this also holds after the bitflip of one of its bits. Assume w.l.o.g. that this is the first bit, corresponding to node 1, and that 1 is on the left-hand side of the configuration, i.e., $1 \in L$. Since flipping node 1 from left to right maintains the number of cut edges, it means that 1 has exactly $d/2$ neighbors in set $L$ and $d/2$ in set $R$. Since $l = d/2$, it also means that the $l$ uncut edges are precisely the $d/2$ ones incident to node $1$, from $1$ to vertices of $L$.

Let $N_D(1)$ denote the set of neighbors of $1$ in $L$, and $N_U(1)$ denote the set of neighbors of $1$ in $R$.
We prove that $N_D(1)$ and $N_U(1)$ are disconnected in graph $G - 1$, obtained from the input graph $G$ by deleting vertex $1$. By contradiction, assume there is a path $P$ from $a \in N_U(1)$ to $b \in N_D(1)$ in  $G - 1$. Path $P$ together with vertex $1$ form a cycle in graph $G$.
By bipartiteness, this cycle is even, so at least one edge of the cycle, other than $\{1,b\}$, is contained in L or R. This is in contradiction with the assumption that $l=d/2$ and all of of these specific $d/2$ edges are linked to the same node $1$. Therefore, $G-1$ is disconnected. This proves claim \ref{claim3}.

This creates four quadrants UL, UR, DL and DR as follows: U is the subset of nodes of $G$ formed by the union of  connected components of $G-1$ intersecting $N_U(1)$, and D is its complement. Then UL, UR, DL and DR are defined as the respective intersections of U and D with L and R (UL = U $\cap$ L and similar). The above considerations tell us that all edges of $G - 1$ go either from UL to UR or from DL to DR. Now, we call $n_{DL},n_{DR}$ the number of nodes in part DL and DR (others than the labelled ones, i.e., the neighbors of node 1). By counting the edges from DL to DR, observe that these variables must satisfy the following equation:
\begin{equation*}
    \frac{d}{2}(d-1) + d n_{DL} = d n_{DR}
\end{equation*}
Because we know that $d$ is even, $n_{DL}$ and $n_{DR}$ are integers, the above equation cannot be satisfied. 

\paragraph{Case $l>d/2$:} $l$ must be strictly larger than $d/2$, i.e. $l \in [\frac{d}{2}+1,d]$. 
All these $l$ uncut edges can be split between $r_L$ and $r_R$, the ones on the left side and right side respectively and wlog we choose that already $d/2$ of them are on the left side. So $r_L \in [\frac{d}{2},d]$, $r_R \in [0, \frac{d}{2}]$ and $l=r_L+r_R$. Again we can count the number of edges that lie across L and R and we end up with:
\begin{equation*}
    dn_L - 2r_L = dn_R - 2r_R 
\end{equation*} where $n_L=|L|$, $n_R=|R|$ and $n_L+n_R=n$ the total number of nodes. In a $d-$regular bipartite graph, $n$ is necessarily even, so we have that 
$$
r_L-r_R = 2(\frac{n}{2}-n_R)\frac{d}{2}=kd \quad \text{ for } k\in \mathbb{Z}
$$ The potential values for $r_L$ and $r_R$ bring the interval for $r_L-r_R$ to $[0,d]$. So only $k=0$ and $k=1$ are possible. If $k=0$, then $r_L=r_R=d/2$ so $l=d$. If $k=1$, then $r_R=0$, $r_L=d$ so $l=d$. In any case, the only possibility is to have $l=d$ which concludes the proof of claim \ref{claim2}. \\ 

These two claims simplify the expression of the different AC occurrence conditions, becoming:
\begin{enumerate}
    \item[-] AC if $\deg_{\text{avg}}(\text{loc}) > 4$;
    \item[-] NO-AC if $\deg_{\max}(\text{loc}) < 4$;
    \item[-] UNDEFINED if $ \deg_{\max} (\text{loc})>4> \deg_{\text{avg}} (\text{loc})$.
\end{enumerate} 

We are left with a last thing to show to assure that no AC occurs while solving MaxCut on d-regular bipartite graphs with QA. To this purpose, we show this final lemma:

\begin{lemma}
\label{lem:NOAC}
    If $d \notin \{2,4 \}$ then $ \deg_{\max} (\text{loc}) < 4$.
\end{lemma}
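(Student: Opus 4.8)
The plan is to reduce to the only interesting case $l=d$ with $d$ even and $d\ge 6$, then bound the degree of an arbitrary vertex of $G_{loc}$ by a double count on its uncut edges, and finally exclude the extremal value $4$ using bipartiteness.

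First I would invoke Claim~\ref{claim2}: if $l<d$ then $\deg_{\max}(\text{loc})=0<4$ and there is nothing to prove, so we may assume $l=d$, i.e.\ every vertex of $G_{loc}$ is a configuration of the input graph $G$ with exactly $d$ uncut edges. Fix such a configuration $x$ that is non-isolated in $G_{loc}$, and let $S$ be the set of vertices $v$ of $G$ whose side-flip in $x$ produces another configuration with exactly $d$ uncut edges; then $\deg_{G_{loc}}(x)\le |S|$ (flipping the symmetry-broken vertex $1$ is forbidden, which can only lower the count). Flipping $v$ turns its $u_v$ uncut edges into cut edges and its $d-u_v$ cut edges into uncut edges, changing the total number of uncut edges by $d-2u_v$; hence $v\in S$ forces $u_v=d/2$. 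In particular, if $d$ is odd then $S=\emptyset$ and we are done, so from now on $d$ is even, and $d\ge 6$ since $d\notin\{2,4\}$.

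Next I would double-count incidences between $S$ and the set $U$ of uncut edges of $x$, which has $|U|=d$. Each $v\in S$ lies on exactly $d/2$ edges of $U$, so
\begin{equation*}
    |S|\cdot\frac{d}{2}\;=\;\sum_{v\in S}\bigl|\{e\in U:\ v\in e\}\bigr|\;=\;\sum_{e\in U}|e\cap S|\;\le\;2|U|\;=\;2d,
\end{equation*}
giving $|S|\le 4$. It remains to rule out $|S|=4$. If $|S|=4$ the inequality is tight, so $|e\cap S|=2$ for every $e\in U$: every uncut edge joins two vertices of $S$. But $|U|=d\ge 6$ and there are only $\binom{4}{2}=6$ pairs inside $S$, which forces $d=6$ and $U=\binom{S}{2}$, i.e.\ the four vertices of $S$ span a $K_4$ of (uncut) edges in $G$. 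This contradicts the bipartiteness of $G$, so $|S|\le 3<4$ and therefore $\deg_{\max}(\text{loc})<4$.

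The only delicate point is the boundary case $d=6$: the plain double count yields only $\deg_{\max}(\text{loc})\le 4$, and one genuinely needs the parity/bipartiteness structure (no $K_4$ among four pairwise ``uncut-adjacent'' vertices) to sharpen it to $\le 3$. The excluded cases $d=2$ and $d=4$ are exactly those for which this $K_4$ obstruction does not apply, and they are treated separately in Appendix~\ref{app:case} by appealing directly to Theorem~\ref{th:main}.
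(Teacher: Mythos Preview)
Your proof is correct and follows essentially the same approach as the paper's: both reduce to $l=d$ via Claim~\ref{claim2}, observe that each flippable vertex must have exactly $d/2$ uncut incident edges, double-count incidences to get $|S|\le 4$, then use that equality forces all $d$ uncut edges inside $S$ so that $d/2\le 3$, and finally kill the $d=6$ case because the resulting $K_4$ violates bipartiteness. Your write-up is somewhat more explicit (the formal double count, the remark that the fixed bit only lowers $\deg_{G_{loc}}(x)$ relative to $|S|$, and the simple-graph pigeonhole excluding $d>6$), but the structure and the key ideas are the same as in the paper.
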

\begin{proof}
    Recall that odd values for $d$ are already disregarded as $d$ must be even. Suppose it is possible to have $d_{\max}(\text{loc}) \geq 4$ then it means that we need at least 4 nodes in a configuration such as Figure~\ref{fig:localconf}, where half of their edges are uncut. Let us call $F$ the set of these latter nodes, i.e. $|F|=d_{\max}(\text{loc})$. It means that there are at least $|F|*\frac{d}{2}\geq 2d$ outgoing uncut edges from the nodes in $F$. By outgoing edge from a node, we mean the extremity of the edge that leaves the node (each edge contributes to two outgoing edges, one for each of its nodes). So here we count the number of edges that leave a node in $F$ which are uncut. We are allowed to at most $d$ uncut edges to be a local minimum. So all of these $2d$ outgoing uncut edges need to generate exactly $d$ edges. This remark forces $d_{\max}(\text{loc})$ to be smaller than 4, so suppose $d_{\max}(\text{loc}) = 4$. One node has only 3 possible neighbors for its $d/2$ uncut edges, so it is possible as long as $\frac{d}{2} \leq 3$, i.e., $d\leq 6$. For $d=6$, linking all of these edges creates a triangle which makes the whole graph non-bipartite.
%
%
%
%
\end{proof}
\paragraph{Case $d=2$ and $d=4$.}  In these two cases, $d_{\max}(\text{loc}) = 4$ and $d_{\text{avg}}(\text{loc}) < 4$, so they fall in the UNDEFINED regime and further studies are necessary. In Appendix \ref{app:case}, we detail how we can still classify them in the NO-AC regime by directly using the more technical result from the theorem.  \\

\noindent These above results allow us to conclude on the absence of anti-crossing during an annealing process to solve MaxCut d-regular bipartite graph for $d \notin \{2,4 \}$ and show theorem \ref{thm4}. One can deduce from this that there is no exponentially closing gap leading to a polynomial runtime to find the optimal cut in regular bipartite graphs via QA. A natural question rises from this conclusion: can we draw a similar conclusion for general bipartite graphs? We discuss this in the next subsection.

\subsection{General bipartite graphs}\label{sub:gen_bip}

In this section, we are interested in the behavior of the energies if we look at bipartite graphs in general. We construct a family of bipartite graphs that respect the condition of occurrence of an anti-crossing, meaning that exponentially closing gaps can arise even for MaxCut on bipartite graphs. Let $G(E,V)$ denotes a bipartite graph. Similarly to the previous section, $\langle H_1 \rangle_0=-\frac{|E|}{2}$, $E_{gs}=-|E|$ and $\Delta H_1=l \in [1,\deg_{\min}(G)]$. Claim \ref{claim2} is still applicable with the minimum degree $\deg_{\min}(G)$ of $G$. So $\Delta H_1=\deg_{\min}(G)$ and $n\alpha_T$ becomes $4\frac{\deg_{\min}(G)}{\deg_{\text{avg}}(G)}$. The condition for the different regimes can be written as:
\begin{enumerate}
    \item[-] AC if $\deg_{\text{avg}}(\text{loc}) > 4\frac{\deg_{\min}(G)}{\deg_{\text{avg}}(G)}$;
    \item[-] NO-AC if $\deg_{\max}(\text{loc}) < 4\frac{\deg_{\min}(G)}{\deg_{\text{avg}}(G)}$;
    \item[-] UNDEFINED if $ \deg_{\max} (\text{loc})>4\frac{\deg_{\min}(G)}{\deg_{\text{avg}}(G)}> \deg_{\text{avg}} (\text{loc})$.
\end{enumerate} 
The first point gives us the condition for a graph $G$ that produces an anti-crossing under a QA evolution for the MaxCut problem. Firstly, looking only at the right-hand side, the ratio $\frac{\deg_{\min}(G)}{\deg_{\text{avg}}(G)}$ is small for highly irregular graphs. From what we have seen in the previous subsection, the average degree for $G_{loc}$ is certainly smaller than 4 so we need to play with the degree of $G$. 
Even though we remove the regularity hypothesis, we can still use some results from the above cases. Indeed, 
in that setting, $G_{loc}$ arises from the bi-partition of a $\deg_{\min}(G)$-regular induced subgraph of $G$.
We look at graphs $G$ with a large average degree but  with also a small minimum degree and a large $\deg_{\text{avg}}(\text{loc})$. The cycle produces the densest $G_{loc}$ but it is highly connected to the ground state and the average degree of the cycle is not quite large. The idea is to attach two complete bipartite graphs ($K_{rr}, K_{ll}$) that will increase the average degree of the graph by two parallel sequences of nodes of degree 2 ($P_1, P_2$) that will create the dense $G_{loc}$ and small $\deg_{\min}(G)$, equals to 2. Figure \ref{fig:graphconf} provides an example of a such graph with $r=l=3$ and $P_1,P_2$ are sequences of $k_1=k_2=2$ adjacent nodes of degree 2. $k_1$ and $k_2$ need to be of the same parity to assure bipartiteness of the whole graph. Three configurations of the same graph are shown, corresponding to the ground state (left), and two configurations of the first excited subspace (middle, right), that create the different components in $G_{loc}$~(Fig \ref{fig:gloc}). \\ \\

\begin{figure}[ht]
    \centering
    \includegraphics[scale=0.6]{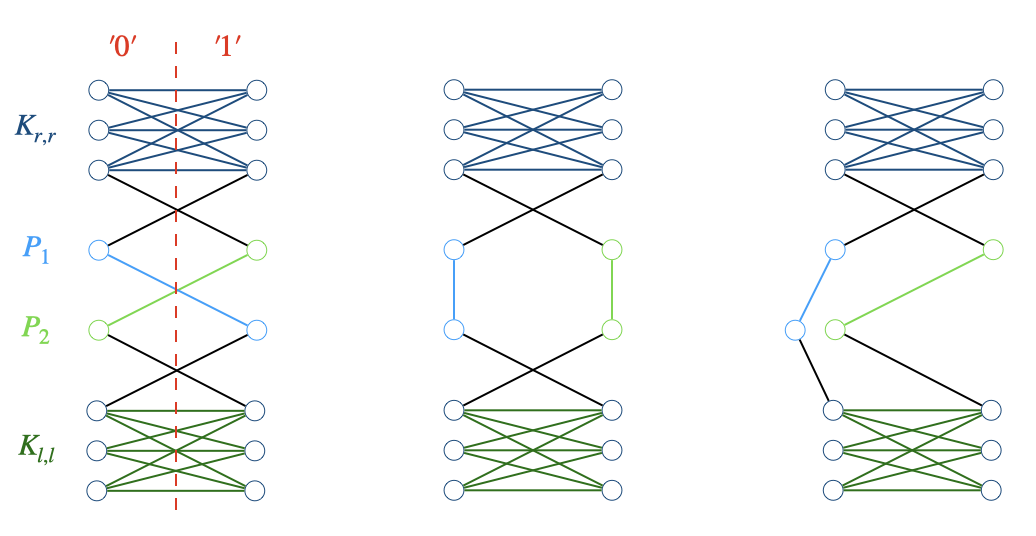}
    \caption{Configurations of $G$ in its ground state (left) and first excited state. (middle) is a configuration far from GS, (right) a configuration neighboring GS.}
    \label{fig:graphconf}
\end{figure}

\begin{figure}[ht]
    \centering
    \includegraphics[scale=0.6]{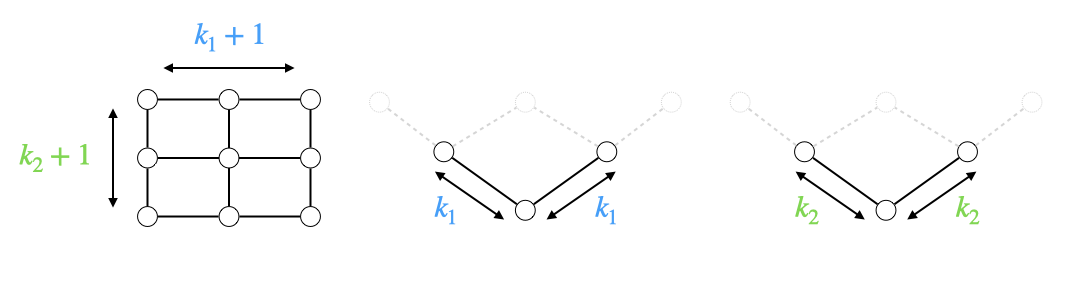}
    \caption{$G_{loc}$ of graph $G$. Three components : (middle) and (right) similarly: components corresponding to states in a configuration close to the one on (Fig \ref{fig:graphconf} - right) and (left) component corresponding to states in the configuration of (Fig \ref{fig:graphconf} - middle). The light dashed grey edges and nodes show how these two components grow when $k_i>2$.}
    \label{fig:gloc}
\end{figure}
\noindent The largest component of $G_{loc}$ is a lattice of size $(k_1+1,k_2+1)$ if $k_i$ represents the number of nodes in $P_i$. It is far away from the ground state as we need to flip at least all the nodes of the $K_{r,r}$ part. The two other components can be viewed as subgraphs of the large component so they have smaller eigenvalues than the largest one of the lattice; they are also strongly connected to the ground state. Figure \ref{fig:detailedGloc} shows the details of the relation between the nodes of $G_{loc}$ and graph configurations in a left/right partition. The middle configuration of Figure \ref{fig:graphconf} corresponds to the middle node of the lattice in $G_{loc}$. Then moving each node in blue or green produce another configuration with the same edge penalty.

\begin{figure}[ht]
    \centering
    \includegraphics[scale=0.7]{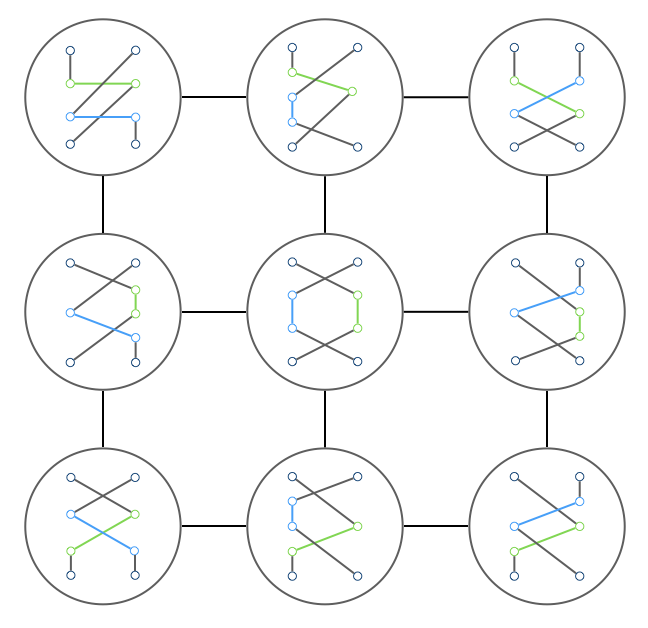}
    \caption{Details of the large component of $G_{loc}$ and how each configuration is related by bitflip. We intentionally omit the drawing of the $K_{r,r}$ and $K_{l,l}$ which do not play a role in 
    $G_{loc}$.}
    \label{fig:detailedGloc}
\end{figure}

We directly have that $\deg_{\min}(G)=2$. Now, we need to derive the average degree of $G$ and of the largest component of $G_{loc}$ (its maximum degree being 4).

\begin{align*}
    \deg_{\text{avg}}(\text{loc}) &= \frac{4*2 + 2(k_1 -1 + k_2 -1)*3 +(k_1 -1)(k_2 -1)*4}{(k_1+1)(k_2+1)} \\
    &= \frac{4k_1k_2 +2(k_1+k_2)}{(k_1+1)(k_2+1)} \\
    &= 4 \left (1-\frac{1+\frac{1}{2}(k_1+k_2)}{(k_1+1)(k_2+1)} \right) \\
    &= 4 \left (1-\frac{1}{k+1} \right) \hspace{6.5cm} \text{ for } k=k_1=k_2 \\
    \deg_{\text{avg}}(G) &= \frac{(k_1+k_2)*2 + 2r*r +2l*l+4}{k_1+k_2+2r+2l}\\
    &=\frac{2k+r^2+l^2+2}{k+r+l} \hspace{6.27cm} \text{ for } k=k_1=k_2 
\end{align*}

Let's solve the equation $\deg_{\text{avg}}(\text{loc}) > 4 \frac{\deg_{\min}(G)}{\deg_{\text{avg}}(G)}$ with $\deg_{\min}(G)=2$.
\begin{align*}
    \deg_{\text{avg}}(\text{loc}) &> 4 \frac{\deg_{\min}(G)}{\deg_{\text{avg}}(G)} \\
    \Rightarrow 1-\frac{1}{k+1}  &> \frac{2(k+r+l)}{2k+r^2+l^2+2} \\
    \Rightarrow \frac{r^2+l^2+2-2r-2l}{2k+r^2+l^2+2}  &> \frac{1}{k+1} \\
    \Rightarrow (k+1)(r^2+l^2+2-2r-2l)   &> 2k+r^2+l^2+2 \\
    \Rightarrow k(r^2+l^2-2r-2l)   &> 2r+2l \\
    \Rightarrow k   &> \frac{2(r+l)}{r(r-2)+l(l-2)} \\
    \Rightarrow k   &> \frac{2(r+3)}{r(r-2)+3} \hspace{3.1cm} \text{ for } l=3 
\end{align*} We have a limit at $r=3$ and $k=2$ for a graph of size 16. Then the smallest graphs that satisfy the condition are for $r=3$ and $k=3$ or $r=4,l=3$ and $k=2$ which bring the size of the smallest graphs satisfying AC condition to 18 nodes. \\

\noindent This above construction shows that there exist bipartite graphs that exhibit an AC. The presence of an anti-crossing implies an exponentially closing gap bringing the provable runtime to find the optimal cut exponentially large in the size of the graph. This construction can be scaled up easily by growing the parameters $k,r$ and $l$. In the next subsection, we numerically investigate the presence of AC on graphs of this family to support this theoretical result. 

\subsection{Numerical study: AC and other observations}
In this section, we give some numerical evidence of the occurrence of the AC in the particular family we constructed in Subsection \ref{sub:gen_bip}. The goal is to observe the behavior of the minimum gap and to confirm the exponentially closing gap. We then discuss whether or not these gaps lead to a computational inefficiency of QA and moderate the term AC by looking at the more mathematical definition of \cite{Braida_2021}.

\paragraph{Minimum gap study:} 
Let us first show that the value of the minimum gap supports the theoretical results derived in Section \ref{sec:3} and Subsection \ref{sub:gen_bip}.
To compute this quantity for large graphs, we use the SciPy library \cite{scipy} and its optimized method, scipy.sparse.linalg.eigs, for matrices with a sparse representation. Our Hamiltonians have a sparse representation in the Pauli basis, enabling us to compute the minimum gap for graphs with up to 28 nodes.

To satisfy the conditions required for our application, we fix one node of the graph to lift the standard MaxCut symmetry. Specifically, we fix one node of the $K_{l,l}$ part on the left (L) side of the partition. We consider the family of graphs $G_{rk}$ with the same structure as in the previous section, where we fix $l=3$ and assume $k_1=k_2=k$. Therefore, we can vary two parameters (Figure \ref{fig:Cf_Grk} shows the schematic energy landscape of $H_1$ for $G_{rk}$ :

\begin{enumerate}
\item[-] increasing $r$ increases the distance between $G_{loc}$ and the ground state in the hypercube, as all the $K_{r,r}$ part needs to be flipped (fixing one node in the $K_{l,l}$ part blocks the possibility to flip this part entirely),
\item[-] increasing $k$ creates a larger $G_{loc}$, resulting in a larger local minimum that is not linked to the ground state, but also increases the two other parts of $G_{loc}$ connected to it.
\end{enumerate} 

\begin{figure}
    \centering
    \includegraphics[scale=0.5]{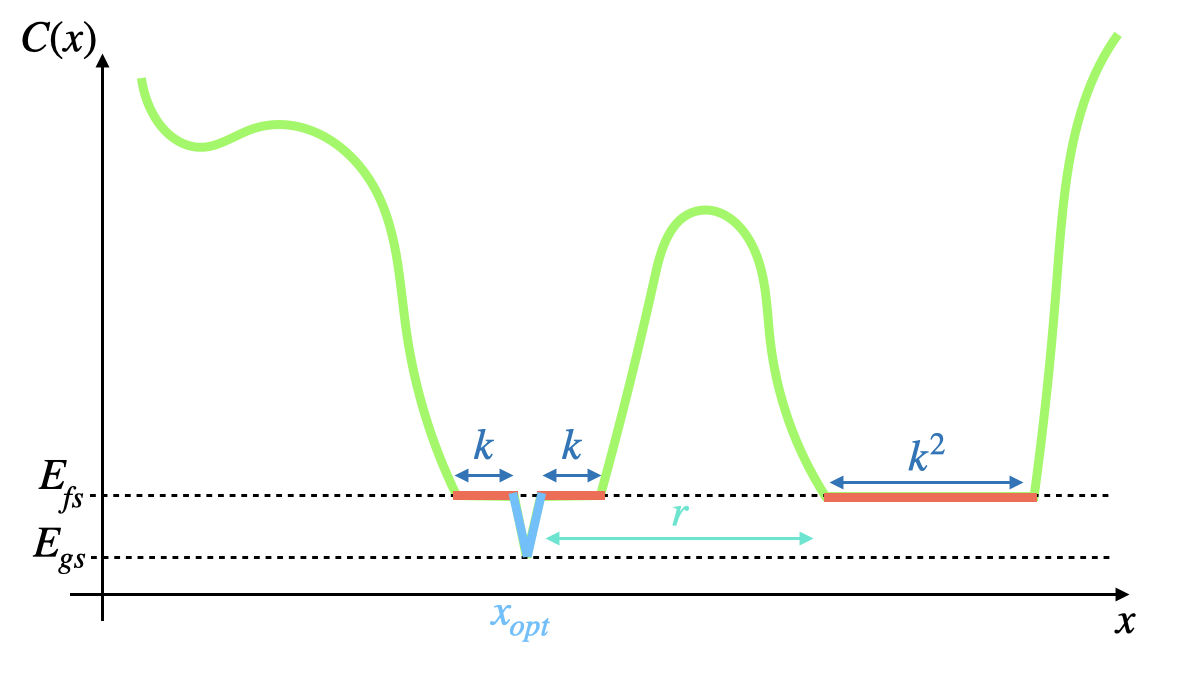}
    \caption{Schematic energy landscape of the MaxCut function on an instance $G_{rk}$ and how $r$ and $k$ affect it.}
    \label{fig:Cf_Grk}
\end{figure}

We denote $\Delta_{rk}(s)$ as the difference between the two lowest instantaneous eigenvalues of $H(s)$ associated with $G_{rk}$, i.e., the spectral gap of the time-dependent Hamiltonian. We plot these gaps in Figure~\ref{fig:delta_rk} (a) by varying $r$ and $k$. Specifically, we observe that increasing $r$ by 1 divides the gap by 2. To illustrate this, we also plot Figure~\ref{fig:delta_rk} (b) the minimum gap of $\Delta_{rk}$ for $k=2$ against $r$. We fit this curve with an exponentially decreasing function of $r$. When $k$ is fixed, it is straightforward to see that $r \simeq \frac{n}{2}$.

\begin{figure}[ht]
    \centering
    $\begin{array}{cc}
       \includegraphics[scale=0.41]{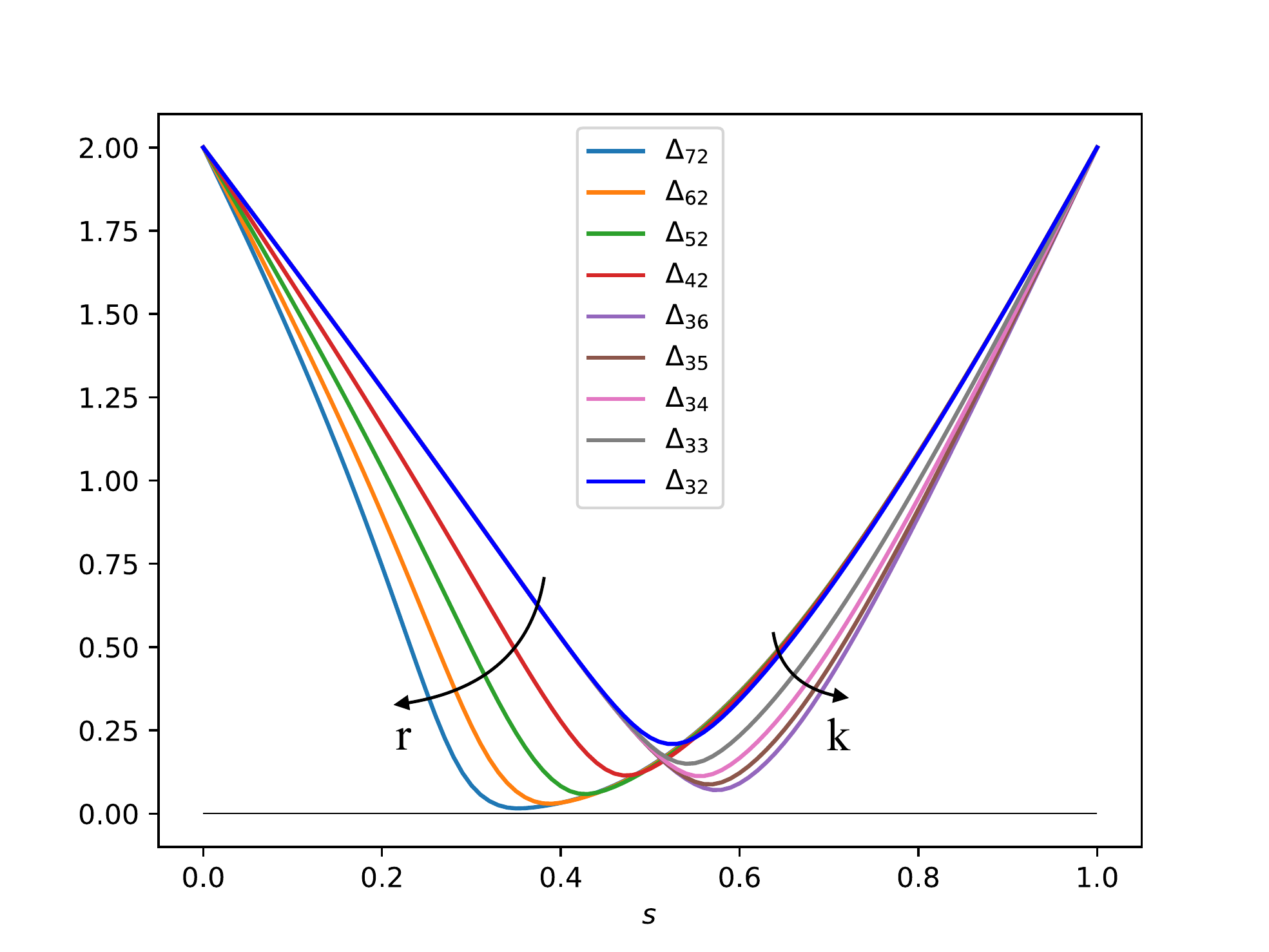}  & \includegraphics[scale=0.5]{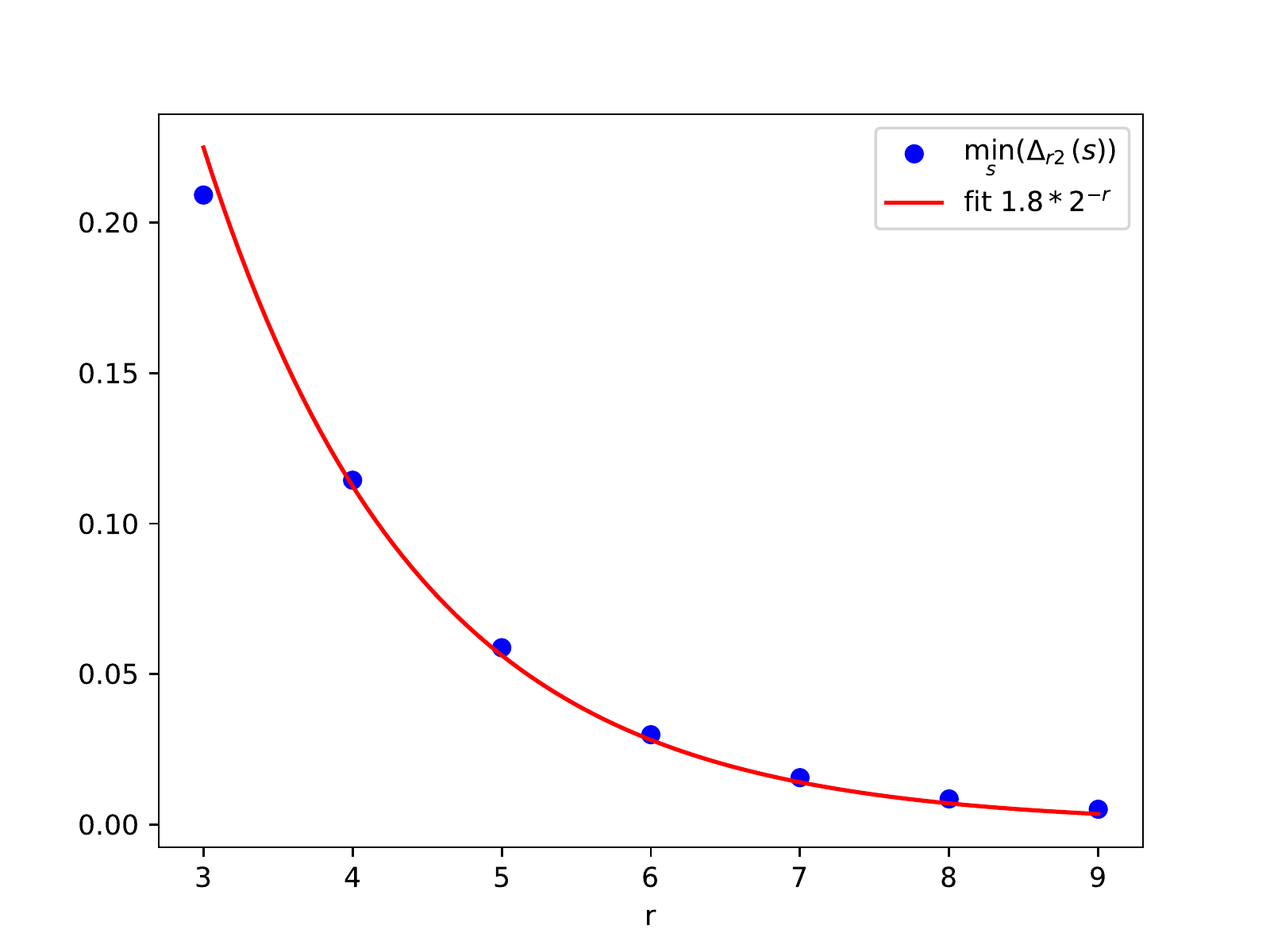} \\
        (a) & (b)
    \end{array}$
    
    \caption{(a) Evolution of the spectral gap $\Delta_{rk}(s)$ and (b) Minimum gap of $\Delta_{r2}$ for $r$ going from 3 to 9. It fits an exponentially decreasing tendency.}
    \label{fig:delta_rk}
\end{figure}
Figure~\ref{fig:delta_rk} supports the main theorem in Section~\ref{sec:3} and the construction in Section~\ref{sec:4}. The distance to the ground state appears to play a major role in the minimum gap compared to the size of $G_{loc}$. Remember that $G_{loc}$ has three components and two of them are linked to the ground state while the other one is a lattice far from the ground state. Increasing $k$ also increases the width around the ground state, making it easier to reach than if it were isolated while increasing $r$ has no impact on $G_{loc}$. \\

\noindent Typically, it is assumed that an exponentially closing gap implies the failure of QA \cite{altshuler2010anderson}. In the next paragraph, we investigate the probability of measuring the ground state at the end of a QA evolution after a time $t_{\max}$ and discuss about AC definition which opens a new question on the computational efficiency of QA.

\paragraph{Discussions about AC and QA failure:} Now that we have established the exponentially small gaps for the graph $G_{rk}$ when $r$ is increasing, we can wonder if it can be deduced that QA is inefficient to solve those instances, as this is the usual deduction from small gaps. In Figure \ref{fig:pr2}, we observe the probability $p_{rk}$ of measuring the ground state at the end of a quantum annealing (QA) evolution for different instances of $G_{rk}$ as a function of $t_{\max}$. This plot was obtained using the AnalogQPU of the Atos' quantum software. Surprisingly, the probability seems to reach the value around 0.5 faster than expected, meaning in a time $t_{\max}$ that does not appear to depend too much on the size of the graph. This observation is not a contradiction of the adiabatic theorem, as it will certainly converge to 1 in an exponentially long runtime. It could also be just a scale illusion: for much larger graphs, the probability might stay at 0 for a longer time than observed here, but this is not what the point below suggests. However, it raises questions about the effectiveness of QA in practical applications even when exponentially small gaps are present. 

\begin{figure}[ht]
    \centering
    \includegraphics[scale=0.6]{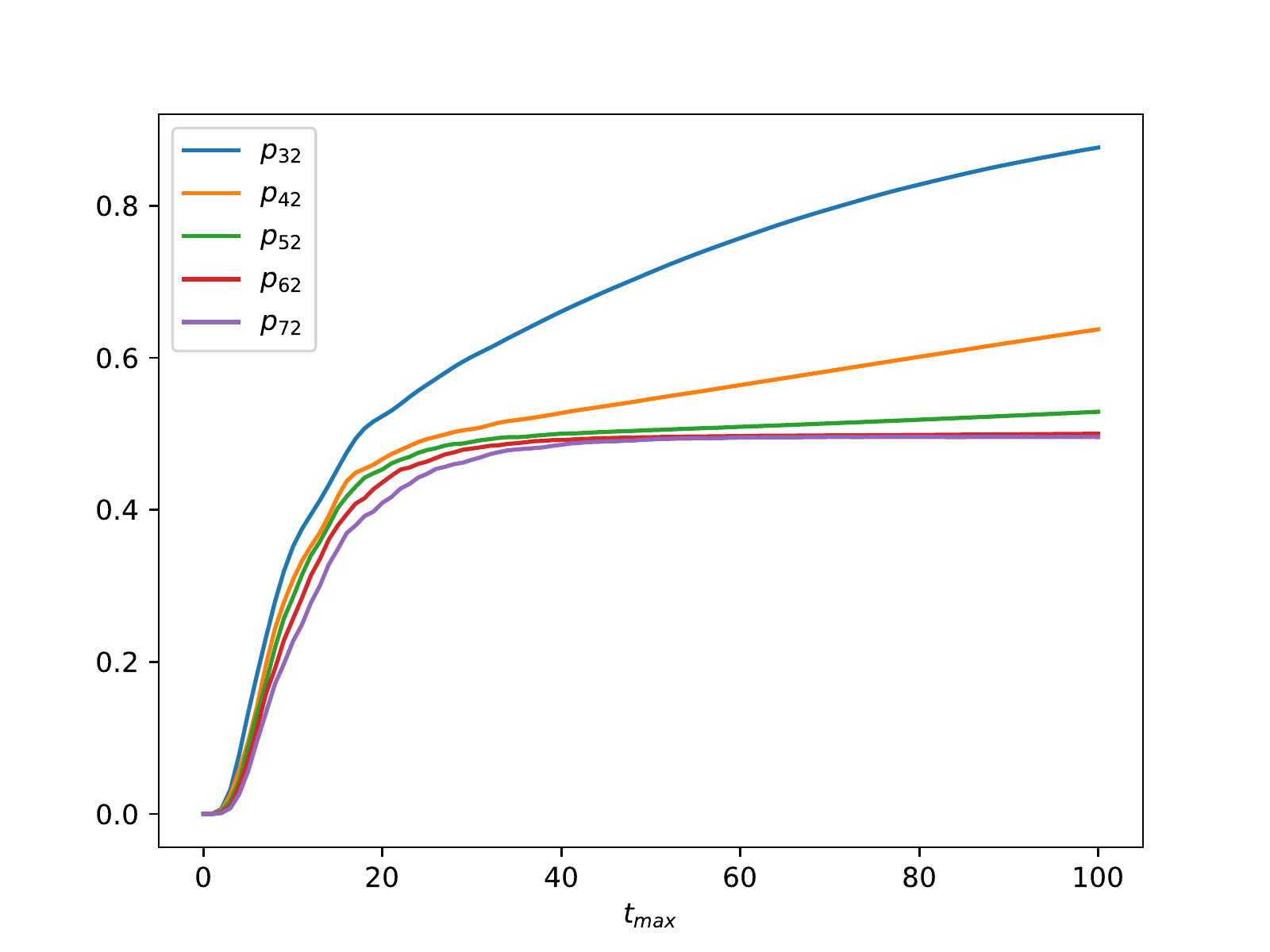}
    \caption{Probability of measuring the ground state after a time $t_{\max}$ for instances with $k=2$ and $r \in [3,7]$.}
    \label{fig:pr2}
\end{figure}

\noindent The observed gaps in Figure \ref{fig:delta_rk} exhibit an exponentially closing behavior, which is a signature of the AC phenomenon we are looking at. However, the computational complexity does not seem to be affected, in the sense that a constant probability to obtain the optimal solution is reached in a time that does not seem to depend too much on the graph size. We can notice in Figure \ref{fig:delta_rk} (a) that the trend of the gaps appears to be softer compared to other observed ACs \cite{bapst2012quantum}, indicating a smoother transition. To address this observation, \cite{Braida_2021} proposed a more formal definition of anti-crossings that involves a new set of quantities. Let $g_0(s)=|\langle E_0(s)|GS\rangle|^2$ and $g_1(s)=|\langle E_1(s)|GS\rangle|^2$ be the overlap squared of the instantaneous eigenstate (zeroth and first respectively) of $H(s)$ with the ground state $|GS\rangle$ of $H_1$. Typically, at anti-crossing point, these curves undergo a harsh exchange of position. If $g_0(s)$ smoothly increases toward 1, it is not an AC according to this definition. For the graph $G_{rk}$, the conditions given in this formal definition do not seem to be fully satisfied, as the plots in Figure \ref{fig:gs_behave} show. On the left, an example of behavior of $g_0$ and $g_1$ when AC happens, the curves experience an almost discontinuity at AC point, changing the position of $g_0$ and $g_1$. On the middle and right plots, $g_0$ and $g_1$ for instances $G_{32}$ and $G_{72}$ respectively. In the $G_{32}$ case, $g_1$ starts to become bigger than $g_0$ but it produces only a little bump and $g_0$ has a smooth growth toward 1.  One could think that this phenomenon is due to the small size of the instance, and that by considering larger instances but with very small gaps, we would observe a ``typical'' AC behaviour. However, on the $G_{72}$ case, where the size increases and the gap decreases, this bump totally disappears and we can only attest a smooth growth of $g_0$. This observation indicates the opposite of an AC behavior leading to an efficient QA evolution to solve these instances.
This raises the question of whether every exponentially closing gap necessarily leads to a failure of QA, or if AC is a particular event that creates an exponentially closing gap leading to a complete leak of the probability distribution into higher energy levels.

\begin{figure}
    \centering
    \includegraphics[scale=0.35]{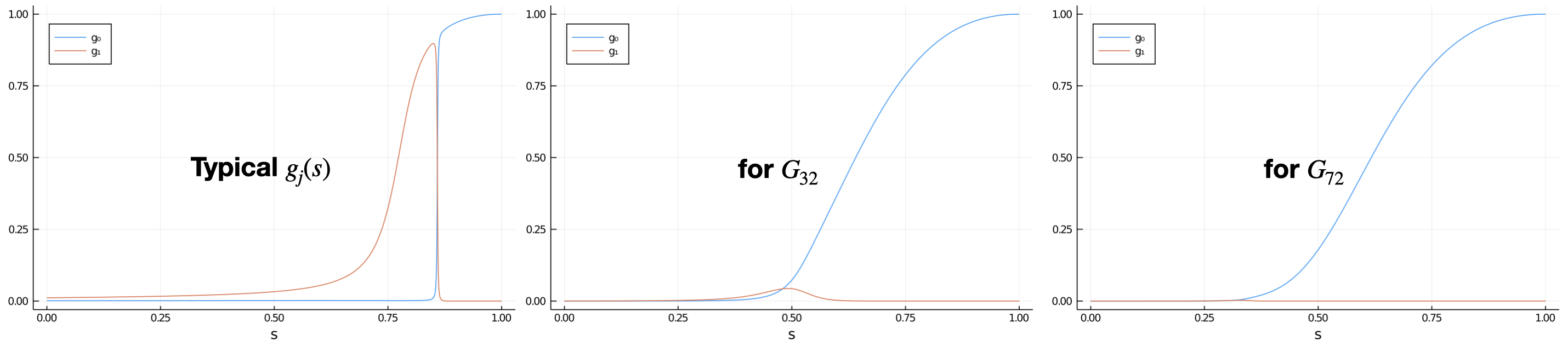}
    \caption{Evolution of $g_0(s)$ in blue and $g_1(s)$ in red for graph $G_{72}$ (right), $G_{32}$ (middle) and a typical behavior (left) during an AC like in \cite{Braida_2021}}
    \label{fig:gs_behave}
\end{figure}

\section{Conclusion}

In conclusion, in this work, we prove a new theorem showing a condition on the occurrence of anti-crossings during a quantum annealing process. The signature of AC we consider is the presence of exponentially closing gaps. Using a first-order perturbative analysis of the evolution at the beginning on the initial ground energy of $H_0$ and at the end on the non-degenerated ground energy and degenerated first eigenenergy of $H_1$, we manage to articulate these expressions together to derive a general condition on the occurrence of AC. In particular, if $\lambda_0(\text{loc})>n\alpha_T$, an AC occurs, where $\lambda_0(\text{loc})$ is the principal eigenvalue of the graph $G_{loc}$ which is the graph induced by the degenerated states of the first eigenspace of $H_1$ in $H_0$. In other words, $G_{loc}$ is the restriction of $H_0$ to the states that belong to the first excited space of $H_1$. $\alpha_T$ is a parameter that depends only on the target Hamiltonian, i.e. only on the problem we want to solve. 

We then apply this theorem to the MaxCut problem and we show that for regular bipartite graphs, the condition for AC to occur is never fulfilled meaning that the gap stays ``large'' for these instances. This means that it is efficient to solve MaxCut on regular bipartite graphs with quantum annealing. As far as we know, it is the first proof of efficiency for this problem on any class of graphs. 

Eventually, by removing the regularity assumption, we manage to create highly irregular bipartite graphs that satisfied the AC conditions. To support the theoretical development, we numerically investigate the size of the gap while increasing the size of the graph. We show that the minimum gap has an exponentially decreasing fit. Surprisingly, while this usually implies the inefficiency of the annealing process to solve those instances, we observe that the final probability of measuring the ground state at the end of the process seems to reach a constant value of 0.5 independently of the graph size. This means that despite an exponentially closing gap, the computational complexity to solve the instances is not affected. We further investigate the AC phenomenon in these cases by using a more formal definition of AC and conclude that our instances with small gaps do not meet this latter AC description. This opens the question of whether the presence of an exponentially closing gap necessarily entails inefficient annealing, or if the true marker of inefficiency is the presence of an AC as defined in \cite{Braida_2021}.

Overall, our study provides new insights into the efficiency of quantum annealing for solving optimization problems, particularly the MaxCut problem, and highlights the importance of considering the occurrence of ACs during the annealing process.

\bibliographystyle{unsrt}
\bibliography{biblio}

\appendix

\section{Validation of perturbative expansion}
\label{app:val}
We discuss here the validation of this expansion at first order. We need to look at the second order term and compare it to the first or 0$^{th}$ order term. 

\paragraph{Delocalized state expansion:} The eigen basis of the initial Hamiltonian $H_0$ can be written as $$|E_b\rangle = \frac{1}{\sqrt{2^n}} \sum_{x\in\{0,1\}^n} (-1)^{b\cdot x}|x\rangle$$ where $b$ is an $n-$bitstring and $\cdot$ stands for the scalar product over $\mathbb{F}_2^n$. There are $n+1$ differents eigen levels where the $k^{th}$ eigenspace has degeneracy $\binom{n}{k}$ and correspond to eigenstates with bitstring $b$ of hamming weight $|b|=k$ and eigen value $E_b^I=-n+2|b|$ (see \cite{cvetkovic1980spectra} for more details). With this notation, we can write $|\psi_0\rangle$ as $|E_{00...00}\rangle$. We are interested in $$E_{0}^{(2)}=\sum_{b\neq 00...0} \frac{|\langle E_b |H_1|\psi_0\rangle|^2}{E_0^I-E_b^I}$$ For MaxCut problem on graph $G$, we know that $\langle E_b |H_1|\psi_0\rangle = -1/2$ if and only if $G_b$ is exactly one edge. $G_b$ is the graph induced by the node $i$ where $b_i=1$. Therefore $E_{0}^{(2)}=-\frac{|E(G)|}{16}$. We have $E_{0}^{(1)}=\langle H_1 \rangle_0 \simeq -\frac{|E(G)|}{2}$ so $\frac{|E_{0}^{(2)}|}{|E_{0}^{(1)}|}=\frac{1}{8}<1$. 

\paragraph{Ground state expansion:} The eigenbasis of the final Hamiltonian $H_1$ is the canonical basis of the bitstring $|x\rangle$ with energy $E_x$, and we named $|GS\rangle$, the bitstrings corresponding to the ground state with energy $E_{gs}$. It follows that the second order term is
$$E_{gs}^{(2)} =\sum_{x \in \{0,1\}^n} \frac{|\langle x|H_0|GS\rangle|^2}{E_{gs}-E_{x}}$$ where $|\langle x|H_0|GS\rangle|=1$ if and only if the bitstring $x$ is at exactly one bitflip from the bitstring $GS$. We can rewrite it like 
$$E_{gs}^{(2)} =\sum_{x \underset{H_0}{\sim} GS} \frac{1}{E_{gs}-E_{x}}
$$ For MaxCut problem on $d-$regular bipartite graph, we can further simplify. Indeed, from the ground state, flipping one bit gives an energy state $|x\rangle$ of exactly $E_x=E_{gs}+d$. So we end up with $E_{gs}^{(2)}=-\frac{n}{d}$. We have $E_{gs}^{(1)}=0$ and $E_{gs}^{(0)}=E_{gs}=\frac{dn}{2}$ which leads to $\frac{|E_{gs}^{(2)}|}{|E_{gs}^{(0)}|}=\frac{2}{d^2}<1$. For $d=4$ we have the same value as for the delocalized state.

\paragraph{Local minima expansion:} We work in the same basis than gor the latter expansion and we look at 
$$ E_{fs}^{(2)}=\sum_{x \notin V(G_{loc})} \frac{|\langle x|H_0|FS,0\rangle|^2}{E_{fs}-E_x} \leq \sum_{x \notin V(G_{loc})} \sum_{y \underset{H_0}{\sim} x}\frac{|\langle y|FS,0\rangle|^2}{E_{fs}-E_x}$$ The size of this double sum is the number of connection $G_{loc}$ has with the whole hypercube, i.e. $|\partial G_{loc}|$. The term $|\langle y|FS,0\rangle|^2$ is large when the degree of node $y$ in $G_{loc}$ is large so with less occurrence in the above double sum. In average, when a graph is regular its vector coordinate value of the largest eigenvalue is $\frac{1}{\sqrt{|V(G_{loc})|}}$. By introducing the conductance of the subgraph $G_{loc}$ as $\phi(\text{loc})=\frac{|\partial G_{loc}|}{|V(G_{loc})|}$, we can upper bound the second order term with 
$$ |E_{fs}^{(2)}| \leq \phi(\text{loc}) \frac{1}{\min_x |E_{fs}-E_x|}$$ We know that $|E_{fs}^{(1)}| =\lambda_0(\text{loc})\geq d_{\text{avg}}(\text{loc})=n-\phi(\text{loc})$. So the ratio we need to check is $\frac{\phi(\text{loc})}{n-\phi(\text{loc})}\frac{1}{\min_x |E_{fs}-E_x|}$ which smaller if $G_{loc}$ is neighboring high energy states.

\section{Undefined cases of d-regular bipartite graphs}
\label{app:case}
\subsection{Case d=2}
\label{ssec:cycle}
\paragraph{Cycle case (even): } Looking at the specific case of the even cycle, we see that it creates a large $G_{loc}$, see Figure~\ref{fig:GlocCycle}. 
\begin{figure}[ht]
    \centering
    \includegraphics[scale=0.3]{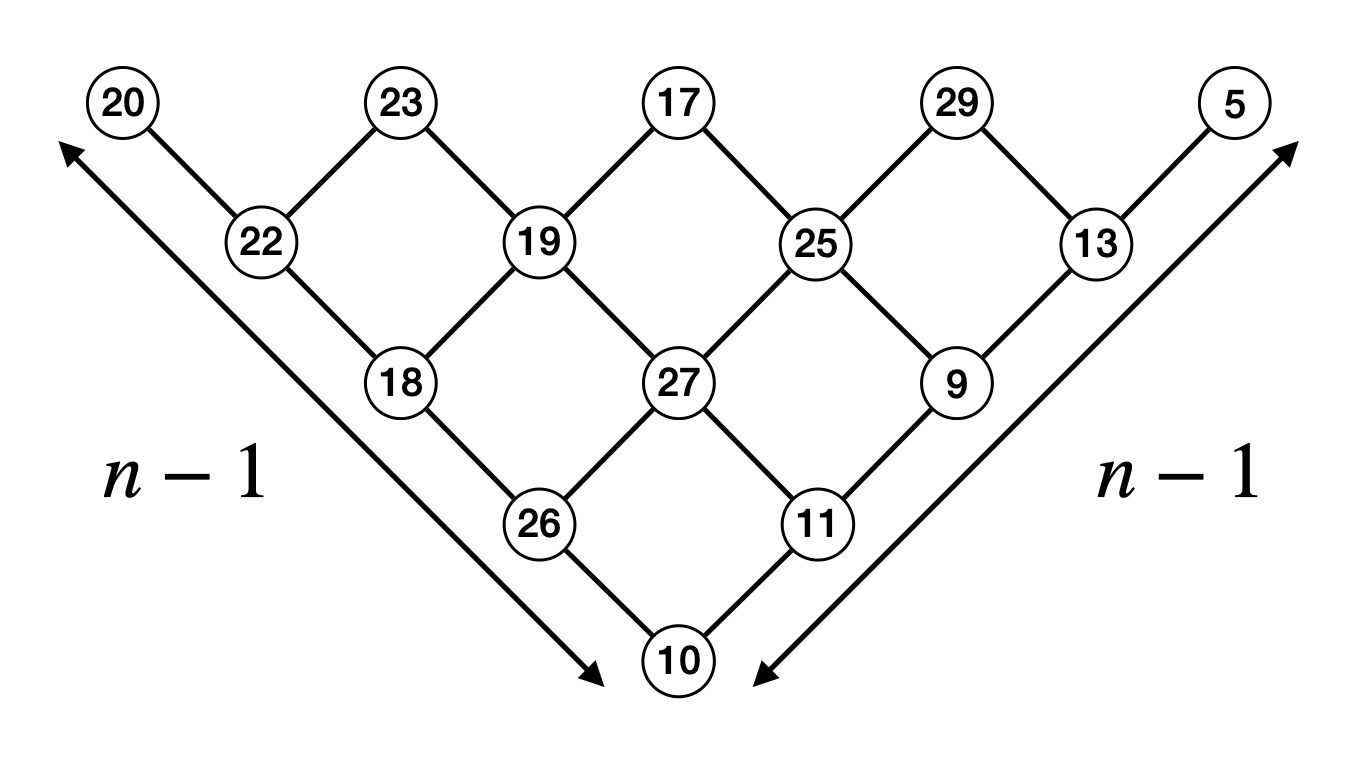}
    \caption{$G_{loc}$ of a cycle of size $n=6$}
    \label{fig:GlocCycle}
\end{figure}
We can easily evaluate the average and maximum degree of this graph as :
\begin{align*}
    \deg_{\max} (\text{loc})&=4 \\
    \deg_{\text{avg}} (\text{loc}) &= 4 \frac{n-2}{n} = 4(1-\frac{2}{n})
\end{align*}

These values bring the cycle in the UNDEFINED regime. However, we can expect that QA will easily work with a MaxCut on an even cycle because its $G_{loc}$ is highly connected to the ground state. Figure \ref{fig:gh0_cycle} shows how $G_{loc}$ (which is the one in figure \ref{fig:GlocCycle}) is linked to te ground state (blue edges). More precisely, there are $n-1$ connection with the ground state in a $(n-1)$-regular graph. This means that there is no potential barrier to overcome going from $G_{loc}$ to the GS.

Another justification is to directly look at the main theorem which says that no AC occurs if $\lambda_0 < 4$, where $\lambda_0$ is the largest eigenvalue of $G_{loc}$. We know that $\lambda_0=\deg_{\max} (\text{loc})$ if and only if $G_{loc}$ is $\deg_{\max} (\text{loc})$-regular, otherwise $\lambda_0<\deg_{\max} (\text{loc})$. So we are in the no-AC regime.

\subsection{Case d=4}
By construction, in the case where $d=4$, there is one possible configuration in a 4-regular graph that brings its $G_{loc}$ in the UNDEFINED regime. It can be artificially scale up as follow :

\begin{figure}[ht]
    \centering
    \includegraphics[scale=0.6]{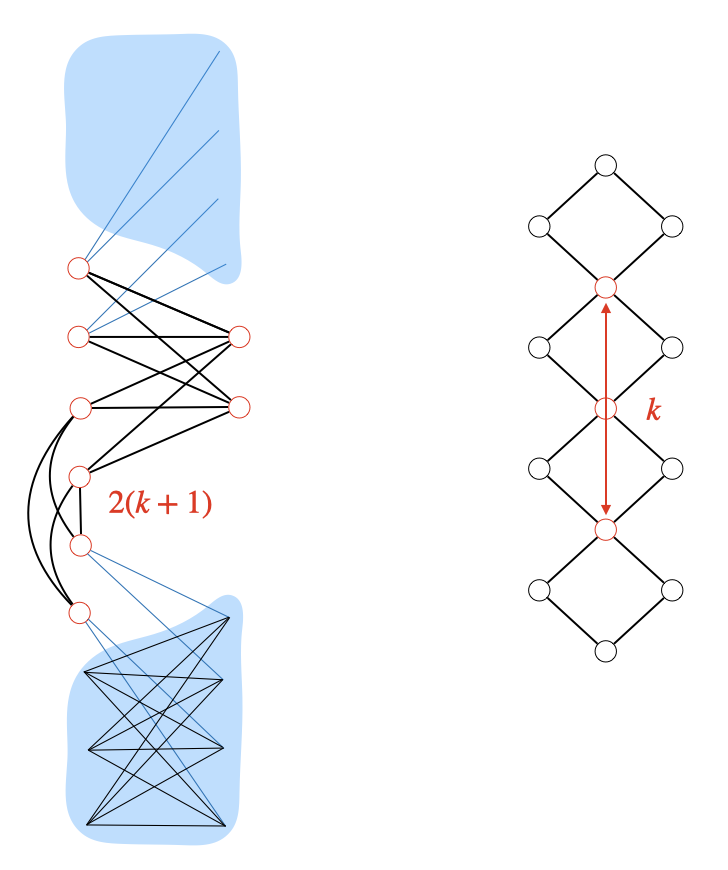}
    \caption{(left) 4-regular bipartite graph in one of its first excited state configuration and (right) $G_{loc}$ where we disregarded the isolated node. Written in red, the number of red nodes ($k=3$). In blue, a part of the graph that complete the graph in a 4-regular one.}
    \label{fig:bipartited4}
\end{figure}

We can easily derive the maximum and average degree of $G_{loc}$:
\begin{align*}
    \deg_{\max} (\text{loc})&=4 \\
    \deg_{\text{avg}} (\text{loc}) &=  \frac{8(k+1)}{3k+4} = 2+\frac{2k}{3k+4}
\end{align*}
where $k$ is a parameter to construct the graph. $G_{loc}$ is not connected to the ground state, so one can imagine that this will produce a potential barrier that creates an AC. But as one can see, the average degree of $G_{loc}$ only tends to $2+2/3 < 4$, which is far from the AC appearance condition. A similar argument from the case $d=2$ can be applied here when using directly the technical theorem with $\lambda_0$.

\end{document}